\newif\iflong
\newif\ifshort
\newenvironment{mathenum}{%
  \begin{enumerate}[(i)]%
  }{%
  \end{enumerate}%
}
\spnewtheorem{rrule}[rrulec]{Rule}{\bfseries}{\itshape}
\crefname{rrule}{Rule}{Rules}
\crefname{table}{Table}{Tables}
\spnewtheorem{brule}[brulec]{Branching~Rule}{\bfseries}{\itshape}
\crefname{brule}{Branching~Rule}{Branching~Rules}
\crefname{property}{Property}{Properties}
\newcommand{\seePT}{\mathrm{See}(P,T)}
\newcommand{\friendP}{\mathrm{f}_P}
\newcommand{\friendPG}{\mathrm{f}_{{P_T}}}
\newcommand{\eqvP}{\sim_P}
\newcommand{\neqvP}{\not\sim_P}
\newcommand{\eqvQ}{\sim_Q}
\newcommand{\neqvQ}{\not\sim_Q}
\newcommand{\eqvPG}{\sim_{{P_T}}}
\newcommand{\Tflat}{{\mathcal{M}(T)}}
\newcommand{\up}[1]{\{#1\}}
\newcommand{\lii}[1]{{\small \textit{\phantom{1}#1}} \>}
\newenvironment{noSpaceTabbing}
  {\setlength{\topsep}{0pt}%
   \setlength{\partopsep}{0pt}%
   \tabbing}
  {\endtabbing}
\DeclareMathOperator{\poly}{poly}
\author{Laurent Bulteau\inst{1}, Guillaume Fertin\inst{1}, Christian
  Komusiewicz\inst{1}\thanks{Post-doc funded by a R\'egion Pays de la
    Loire grant} \\and Irena Rusu\inst{1}}
\institute{
  Universit\'e de Nantes, LINA - UMR CNRS 6241, France.
  \email{\{Laurent.Bulteau,Guillaume.Fertin,Christian.Komusiewicz,Irena.Rusu\}@univ-nantes.fr}
}
\title{A Fixed-Parameter Algorithm for Minimum Common String Partition with Few Duplications}
\begin{document}

\maketitle

\begin{abstract}
  Motivated by the study of genome rearrangements, the NP-hard \textsc{Minimum Common String Partition} problems asks,
  given two strings, to split both strings into an identical set of
  blocks. We consider an extension of this problem to unbalanced
  strings, so that some elements may not be covered by any block.  We
  present an efficient fixed-parameter algorithm for the parameters
  number~$k$ of blocks and maximum occurrence~$d$ of a letter in
  either string. We then evaluate this algorithm on bacteria genomes and
  synthetic data.
\end{abstract}

\section{Introduction} 
Comparative genomics has various applications, one of which is understanding the evolution of genomes under
the assumption that gene content and gene order conservation are
closely related to gene function \cite{overbeek1999use}. To this end,
a fundamental task is to define and compute the true evolutionary
distance between two given genomes \cite{SMEM08}. This is done by the correct identification of orthologs and 
paralogs and by the correct identification of the evolutionary events resulting into changes
in gene content and gene order. The first of these objectives is handled by several homology-based approaches~\cite{tatusov2001cog,remm2001automatic}; more evolved programs handle both objectives
\cite{CZF+05,FCV+07,SZJ10}. The second 
objective gave birth to a  large number of important distances between genomes represented either
as strings or as permutations. Such distances either exploit the similarity between genomes in
terms of gene content and order, or count specific genome
rearrangements needed to transform one genome into another
(see~\cite{FLR+09} for an extensive survey). 

In this work, both objectives above are followed \emph{via} a distance between genomes represented as
strings, which was defined independently by \citet{CZF+05} (for ortholog/paralog identification) and \citet{SMEM08} (for 
evolutionary events defining an evolutionary distance).  Informally, given two strings~$S_1$ and~$S_2$ representing two genomes, the
operation to realize is cutting~$S_1$ into non-overlapping substrings and reordering a subset of these 
substrings such that the concatenation of the reordered substrings is as close as possible to~$S_2$. 
The ortholog/paralog identification between~$S_1$ and~$S_2$ is then directly given by the substrings of
~$S_1$ used to approximately recompose~$S_2$, whereas the evolutionary distance is given by the minimum 
number of substrings needed to obtain such a reconstruction.

The above transformation between the two genomes is formalized by the notion of
common string partition (CSP). Let~$S_1$ and~$S_2$ be two strings on an alphabet
$\Sigma$. A partition~$P$ of~$S_1$ and~$S_2$ into \emph{blocks}~$x_1 x_2 \cdots x_p$ and~$y_1 y_2 \cdots y_q$
is a \emph{common string partition} if there is a bijective
function~$M$ from $D(M)\subseteq \{x_i\,|\, 1\le i\le p\}$ to
$I(M)\subseteq \{y_j\,|\, 1\le j\le q\}$ such that (1)~for
each~$x_i\in D(M)$, $x_i$ is the same string as~$M(x_i)$, and
(2)~there is no letter $a\in\Sigma$ that is 
simultaneously present in some block $x_j\not\in D(M)$ and in some
block $y_l\not\in I(M)$ (see \cref{fig:CSP} for an 
example). The \emph{size} of the common string partition $P$ is the cardinality~$k$ of $D(M)$.
We study the problem of finding a minimum-size CSP:
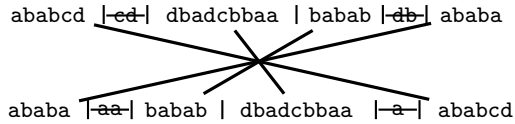
\begin{figure}[t]\centering
  \begin{tikzpicture}[>=stealth,x=33pt,y=18pt]
    \tikzstyle{inv}=[inner sep=0pt,outer sep=0pt,minimum size=0mm]
    \tikzstyle{vert}=[circle,draw,thick, inner sep=0pt,minimum
    size=2mm] \tikzstyle{subtree}=[regular polygon, regular polygon
    sides=3,draw,thick, inner sep=0pt,minimum size=15mm]
    \node (sx1) at (0.4,2) {\texttt{ababcd}}; \node (sx2) at (2.4,2)
    {\texttt{dbadcbbaa}}; \node (sx3) at (3.75,2) {\texttt{babab}};
    \node (sx4) at (5.25,2) {\texttt{ababa}};
  
    \node (sy1) at (5.3,0) {\texttt{ababcd}}; \node (sy2) at (3.25,0)
    {\texttt{dbadcbbaa}}; \node (sy3) at (1.87,0) {\texttt{babab}};
    \node (sy4) at (0.3,0) {\texttt{ababa}};

		\draw [-,thick] (1.04,2.2) -- (1.04,1.8);
    \draw [-,thick] (1.54,2.2) -- (1.54,1.8); 
		\draw [-,thick] (3.25,2.2) -- (3.25,1.8); 
		\draw [-,thick] (4.24,2.2) --  (4.24,1.8);
    \draw [-,thick] (4.74,2.2) --  (4.74,1.8);

    \draw [-,thick] (0.86,0.2) -- (0.86,-0.2);
		\draw [-,thick] (1.36,0.2) -- (1.36,-0.2); 
		\draw [-,thick] (2.4,0.2) -- (2.4,-0.2); 
		\draw [-,thick] (4.16,0.2) --    (4.16,-0.2);
     \draw [-,thick] (4.66,0.2) --    (4.66,-0.2);
		
			\node (del) at (1.29,2) {\texttt{cd}};
			\node (del) at (4.49,2) {\texttt{db}};
			\node (del) at (1.11,0) {\texttt{aa}};
		  \node (del) at (4.41,0) {\texttt{a}};

		\draw [-,thick] (1.07,1.98) --    (1.51,1.98);
     \draw [-,thick] (4.27,1.98) --    (4.71,1.98);
		\draw [-,thick] (4.19,0) --    (4.63,0);
     \draw [-,thick] (0.89,0) --    (1.33,0);

    \draw [-,very thick] (sx1) -- (sy1); 
		\draw [-,very thick] (sx2) --
    (sy2); \draw [-,very thick] (sx3) -- (sy3); \draw [-,very thick]
    (sx4) -- (sy4); 
  \end{tikzpicture}
	\caption{\label{fig:CSP}A common string partition of size 4. 
	Copies of \texttt{a}, \texttt{b}, \texttt{c} and \texttt{d} that could not be matched
	are deleted.}
	\end{figure}
\begin{quote}
  \textsc{Minimum Common String Partition (MCSP)}\\
  {\bf Input:} Two strings~$S_1$ and~$S_2$ on an alphabet $\Sigma$, and an integer~$k$.\\
  {\bf Question:} Is there a common string partition (CSP) of~$S_1$ and~$S_2$ of size at most~$k$?
\end{quote}

The definition of a CSP given above is actually a generalization to
arbitrary (or {\em unbalanced}) strings of the definition given in
\cite{CZF+05} for {\em balanced} strings, that is, when each letter
appears the same number of times in $S_1$ and $S_2$. Note also that in
this paper, the strings we consider are unsigned. Although this model
is less realistic from a genomic viewpoint, our study is a first step
towards improved algorithms for the MCSP problem in the most general
case, that is, for signed and unbalanced strings. 

\paragraph*{Related Work.}
MCSP was introduced by~\citet{CZF+05}, but close variants also exist
with different names, such as {\em block edit
  distance}~\cite{LT97} or {\em sequence
  cover}~\cite{SMEM08}. Most of the literature on MCSP actually
considers the restricted case where the input strings $S_1$ and $S_2$
are balanced. In that case, necessarily $D(M)$ (resp. $I(M)$) contains  
every block from $S_1$ (resp. $S_2$). Let Bal-MCSP denote this
restricted class of problems. Bal-MCSP has been shown to be NP-hard
and APX-hard even if $d=2$, where $d$ is the maximum number
of occurrences of any letter in either input string~\cite{GKZ05}.
Several approximation algorithms exist with ratios 1.1037 when $d=2$~\cite{GKZ05}, 4 when $d=3$~\cite{GKZ05}, and $4d$ in general~\cite{KW07}.
\iflong Bal-MCSP can also be solved
in~$O(2^n\cdot \poly(n))$ time~\cite{FJY+11}, and a greedy heuristic for Bal-MCSP
was also presented by~\citet{SS07}.\fi
Concerning fixed-parameter tractability issues, \citet{Dam08}
initiated the study of Bal-MCSP in the context of parameterized algorithmics by showing
that it is fixed-parameter tractable with respect to the combined
parameter ``partition size~$k$ and repetition number~$r$''. More recently, \citet{JZZ+12} showed that Bal-MCSP can be solved
in~$O((d!)^k\cdot \poly(n))$ time. 

\paragraph{Our Results.}
Our main result in this paper is an improvement on the
latter result, showing that MCSP (and thus, Bal-MCSP) can be solved
in $O(d^{2k}\cdot kn)$ time, thus considerably improving the
running time from~\citet{JZZ+12}. Our result is also more general since
it is one of the rare known fixed-parameter algorithms that deals
with unbalanced strings.
Moreover, a(n approximate) solution to
MCSP is computed within the pipeline of MSOAR, MSOAR2.0 and MultiMSOAR
software~\cite{FCV+07,SZJ10,SPJ11} (all used to determine
orthology relations between genes), hence these programs could benefit from any algorithmic improvement
concerning MCSP~\cite{Jiang10}, such as the one presented here. Indeed, our
algorithm actually runs in $d^{2k'}\cdot kn$,
where $k'$ is the number of blocks of $D(M)$ that contain {\em no letter 
appearing only once} in $S_1$ and~$S_2$. Moreover, we present reduction rules that
yield further speed-up, and finally test our algorithm on
genomic and synthetic data. 

\paragraph{Basic Notation.}
A \emph{marker} is an occurrence of a letter at a specific position in
a string. 
Formally, the marker at position~$i$ in a string~$S$ corresponds
to the pair $(S,i)$, which we denote by~$S[i]$.
Given a marker~$u$ we denote by 
$S(u)$ the string that contains~$u$.
For all~$i$, $1\le i<n$, the markers~$S[i]$ and~$S[i+1]$ are called
\emph{consecutive}. Let~$r(S[i]):=S[i+1]$,~$1\le i<n$, denote the
right neighbor of marker~$S[i]$, and let~$l(S[i]):=S[i-1]$,~$1<i\le n$
denote the left neighbor of marker~$S[i]$. An \emph{adjacency} is a
pair of consecutive markers. For two markers~$u$ and~$v$ we
write~$u\equiv v$ if their letters are the same and~$u=v$ if the
markers are identical, that is, they are at the same position in the
same string.  An \emph{interval} is a set of consecutive markers, that
is, an interval is a set~$\{S[i], S[i+1], \ldots , S[j]\}$ for
some~$i\le j$. We write $[u,v]$ to denote the interval whose first
marker is~$u$ and whose last marker is~$v$. For two intervals~$s$
and~$t$, we write $s\equiv t$ if they represent the same string of
letters (if they have the same contents) and~$s=t$ if they are the
same interval, that is, they start and end at the same position in the
same string. Given two strings $S_1,S_2$,
a letter is \emph{abundant} in a string $S_i$ if it appears
with strictly more occurrences in $S_i$ than in the other string.
Otherwise, it is \emph{rare} in $S_i$.
A marker $u$ is \emph{abundant} 
if it corresponds to an abundant letter in $S(u)$, and \emph{rare}
otherwise.

\paragraph{Fundamental CSP-Related Definitions.}
We assume that~$S_1\neq S_2$, otherwise MCSP is trivially solved by reporting a CSP of size one.
A \emph{candidate match} is an unordered pair of markers $\up{u,v}$
such that $u\equiv v$ and $S(u)\neq S(v)$, that is, the markers have
the same letters and are from different input strings. Two candidate
matches $\up{x,y}$ and $\up{x',y'}$ where~$S(x)=S(x')$ and~$x$ is to
the left of~$x'$ are called \emph{parallel} if~$[x,x']\equiv
[y,y']$. Note that this implies that for the~$i$-th marker~$u$
in~$[x,x']$ and the~$i$-th marker~$v$ in~$[y,y']$ the pair~$\up{u,v}$
is also a candidate match and it is parallel to~$\up{x,y}$ and
to~$\up{x',y'}$. Informally, being parallel means that two 
candidate matches could potentially be in the same block of a CSP.

A \emph{CSP}~$P$ is a set  of pairwise disjoint candidate
matches containing all rare markers. 
If a marker does not appear in
any candidate match of~$P$ then it is necessarily abundant, 
and it is called \emph{deleted} in~$P$,
otherwise we
use $\friendP(u)$ to denote the unique marker $v$ such
that $\up{u,v}\in P$. 
The \emph{block relation} $\eqvP$ of a CSP is defined
as the (uniquely determined) equivalence relation such that each
equivalence class is a substring of~$S_1$ or~$S_2$ and $u\eqvP r(u)$
if and only if $u$ and $r(u)$ are not deleted, and 
$\up{u,\friendP(u)}$ and~$\up{r(u),\friendP(r(u))}$ are
parallel. Note that this implies that, for any two markers~$x$
and~$x'$ with $x\eqvP x'$ it holds that $\up{x,\friendP(x)}$ and
$\up{x',\friendP(x')}$ are parallel. 
The blocks are precisely the equivalence classes of
$\eqvP$ of non-deleted markers, that is, two markers $u$ and $v$ are in the same block iff
$u\eqvP v$.

Due to lack of space, some proofs are deferred to a full version of this work.

\section{An Improved Fixed-Parameter Algorithm}
\label{sec:search-tree}
We now describe our fixed-parameter algorithm. It is a branching algorithm that adds, one by one, candidate matches to a temporary solution. The
main idea is that these candidate matches belong to different blocks
of the CSP.
\subsection{CSPs, Samples and Witnesses}
\label{sec:search-tree-basics}
As stated above, the algorithm gradually extends a temporary solution
called sample. Formally, a \emph{sample} $T$ is a set of disjoint
candidate matches.  We use $\Tflat$ to denote the set of all markers
belonging to a candidate match in $T$ (thus, $|\Tflat|=2|T|$).
The algorithm tries to construct an optimal CSP by extending a
sample~$T$ that describes this CSP and is furthermore
non-redundant. That is, the sample  contains only candidate matches
that are in the CSP and at most one candidate match for each pair of
matched blocks. We call such samples witnesses.
\begin{definition}
  A sample $T=\{\up{x_1,y_1},\up{x_2,y_2}, \ldots, \up{x_m,y_m}\}$ is
  a \emph{witness} of a CSP $P$ if
  (1) $T\subseteq P$, that is, $y_i = \friendP(x_i)$ for each
    $i$, and (2) for all $x,y\in \Tflat$ with $x\neq y$ we have $x \neqvP y$.
\end{definition}
Given a witness~$T$ of some CSP $P$, a marker $u$ is 
\emph{seen by $T$} if $\exists x\in \Tflat$ such that $u\eqvP x$. We
use~$\seePT$ to denote the set of markers seen by $T$ in~$P$. Let
$u\in \seePT$ be a marker seen by~$T$ in~$P$, then we say that~$u$ is
\emph{colored black by~$P$ and~$T$} if $u=x$; $u$ is \emph{colored
  green by~$P$ and~$T$} if it is to the right of $x$; or $u$ is
\emph{colored red by~$P$ and~$T$} if it is to the left of $x$. Note
that the coloring is unique since for each marker~$u$ there is at most
one $x\in \Tflat$ such that $u\eqvP x$.

The algorithm finds a witness 
describing an optimal CSP. More precisely, the aim is to see all
rare markers eventually. A witness $T$ is \emph{complete} if it contains a
marker from every block of $P$. Equivalently, $T$ is complete if it
sees every rare marker.
We first show that if a rare marker is unseen by a witness~$T$
for some CSP~$P$, then another witness for~$P$ can be obtained by extending~$T$.
\begin{lemma}\label{lem:unseen}
  Let $u$ be a rare marker such that $u\notin\seePT$. Then there exists a
  candidate match~$\up{u,v}$ such that $T\cup\{\up{u,v}\}$ is a
  witness of $P$.
\end{lemma}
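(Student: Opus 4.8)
The plan is to pick the partner of $u$ inside the given CSP and show it can simply be appended to $T$. Since $P$ is a CSP it contains every rare marker, and $u$ is rare, so there is a candidate match $\up{u,v}\in P$; put $v:=\friendP(u)$ and $T':=T\cup\up{u,v}$. It then suffices to verify the two conditions in the definition of ``$T'$ is a witness of $P$''.

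For condition~(1) I would first observe that $u,v\notin\Tflat$. Indeed, if $u\in\Tflat$ then $u\eqvP u$ with $u\in\Tflat$, so $u\in\seePT$, contradicting the hypothesis; and if $v\in\Tflat$, say via $\up{v,w}\in T\subseteq P$, then $w=\friendP(v)=u$, which again places $u$ in $\Tflat$. Hence $T'$ is still a set of pairwise disjoint candidate matches, and since $T\subseteq P$ and $\up{u,v}\in P$ we get $T'\subseteq P$, i.e.\ $\friendP(x_i)=y_i$ for every match $\up{x_i,y_i}$ of $T'$.

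For condition~(2) I must show $x\neqvP y$ for every pair of distinct markers $x,y\in\Tflat\cup\up{u,v}$. Pairs entirely inside $\Tflat$ are fine because $T$ is already a witness. The pair $\{u,v\}$ satisfies $u\neqvP v$ since $S(u)\neq S(v)$ while every equivalence class of $\eqvP$ is a substring of a single input string. For $x\in\Tflat$, the relation $u\neqvP x$ is exactly the assumption $u\notin\seePT$. The only remaining case is $v$ versus some $x\in\Tflat$, and here I argue by contradiction: if $v\eqvP x$ with $x\in\Tflat$ then $v\in\seePT$, and I claim this forces $u=\friendP(v)\in\seePT$, contradicting the hypothesis. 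So it reduces to proving that $\seePT$ is closed under $\friendP$, i.e.\ that $w\eqvP x'$ with $x'\in\Tflat$ implies $\friendP(w)\eqvP\friendP(x')$ (whence $\friendP(w)\in\seePT$ because $\friendP(x')\in\Tflat$).

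This last claim is where the real work lies and is, I expect, the main obstacle; everything above is bookkeeping with the definitions. The idea is to walk through the common block of $w$ and $x'$, say $w=w_1,w_2=r(w_1),\dots,w_\ell=x'$ (the case $w$ to the right of $x'$ being symmetric): for each consecutive pair the definition of $\eqvP$ says $\up{w_i,\friendP(w_i)}$ and $\up{w_{i+1},\friendP(w_{i+1})}$ are parallel, and since $[w_i,w_{i+1}]$ has length two this forces $\friendP(w_{i+1})=r(\friendP(w_i))$ together with $[w_i,w_{i+1}]\equiv[\friendP(w_i),\friendP(w_{i+1})]$; reading the same parallelism from the partner side gives $\friendP(w_i)\eqvP\friendP(w_{i+1})$. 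Chaining these equivalences over $i=1,\dots,\ell-1$ yields $\friendP(w)=\friendP(w_1)\eqvP\friendP(w_\ell)=\friendP(x')$, which completes the argument and hence the verification that $T'$ is a witness of $P$.
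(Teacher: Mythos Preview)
Your proof is correct and follows the same approach as the paper's: set $v:=\friendP(u)$ and verify the two witness conditions. The only difference is that what you flag as ``the main obstacle'' --- that $\seePT$ is closed under $\friendP$ --- the paper dispatches in a single clause (``this also implies $v\neqvP y$ for all $y\in\{\friendP(x)\mid x\in\Tflat\}=\Tflat$''), relying implicitly on the symmetry of the block relation under $\friendP$ that you prove by the step-by-step walk along the block; your explicit argument is sound and makes transparent a step the paper leaves to the reader.
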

\begin{proof}
  Let $v=\friendP(u)$ ($u$ is rare, hence it is not deleted), 
	then $\up{u,v}$ is clearly a candidate match.
  Furthermore,~$T\cup\{\up{u,v}\}$ is a subset of $P$. It thus remains
  to show that~$T$ is non-redundant. Since $u\notin\seePT$, $u\neqvP
  x$ for all $x\in\Tflat$. Furthermore, this also implies~$v\neqvP y$
  for all $y\in\{\friendP (x) \mid x\in\Tflat\} = \Tflat$.  Thus
  $T\cup\{\up{u,v}\}$ is a witness of $P$. \qed
\end{proof}
The following lemma shows that when an optimal CSP contains parallel
candidate matches, then the markers that are in the same string are
also in the same blocks of the CSP. We will use this lemma to argue
that the algorithm only considers samples without parallel edges.
\begin{lemma} \label{lem:OptKeepParallels} If a CSP $P$ contains two
  parallel candidate matches $\up{x,y}$ and $\up{x',y'}$ such that
  $S(x)=S(x')$ and $x\neqvP x'$, then it is not optimal.
\end{lemma}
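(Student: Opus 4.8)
The plan is to show that if such parallel candidate matches with $x \neqvP x'$ exist in $P$, then we can construct a strictly smaller CSP $P'$, contradicting optimality. The idea is that the block of $x$ and the block of $x'$ have "parallel copies" of a common substring, and these can be merged: since $\up{x,y}$ and $\up{x',y'}$ are parallel, we have $[x,x'] \equiv [y,y']$, so the interval $[x,x']$ in $S(x)$ and the interval $[y,y']$ in $S(y)$ spell the same string. I would take $P'$ to be obtained from $P$ by rerouting matches so that the block containing $x$ is extended (or fused) to cover $[x,x']$ matched against $[y,y']$ as a single block, discarding the now-redundant matches that previously lived strictly between $x$ and $x'$ (or in the block of $x'$). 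Concretely, replace in $P$ every candidate match $\up{u, \friendP(u)}$ with $u$ strictly inside $(x, x']$ by the parallel match $\up{u, v}$ where $v$ is the marker at the corresponding offset in $[y,y']$ (which is a candidate match by the remark following the definition of parallel), and symmetrically on the $S(y)$ side.

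The key steps, in order, are: (1) verify that the proposed $P'$ is indeed a valid CSP — it is still a set of pairwise disjoint candidate matches, and it still contains all rare markers, because we only re-match markers that were already matched in $P$ (parallel re-matching changes the partner but not the set of matched markers). (2) Check that no abundant-letter conflict is introduced: condition (2) in the definition of CSP (no letter simultaneously in a deleted block of $S_1$ and a deleted block of $S_2$) is preserved because the set of deleted markers is unchanged. (3) Count blocks: show $|D(M')| < |D(M)|$. The point is that in $P$ the markers of $[x,x']$ in $S(x)$ belonged to at least two distinct blocks (since $x \neqvP x'$), whereas in $P'$ they all lie in one block; the same happens on the $S(y)$ side, but we must be careful to count net change correctly — after the rerouting, the block of $x$ and the block of $x'$ (and any blocks strictly between) collapse into one, which strictly decreases the number of blocks by at least one.

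The main obstacle I expect is step (3), the bookkeeping for the block count, together with making the construction of $P'$ fully precise when $x$ and $x'$ are far apart and their blocks overlap with other blocks in complicated ways. One clean way to handle this is to argue purely about the block relation $\eqvP$ versus $\eqvP'$: define $P'$ and then show $\eqvP' \supseteq \eqvP$ as relations on non-deleted markers (every old adjacency within a block is still a within-block adjacency in $P'$), plus show that $x \eqvP' x'$, which is a new within-block relation not present in $P$; since the number of blocks equals the number of equivalence classes of non-deleted markers, a strictly coarser equivalence relation on the same ground set yields strictly fewer classes, hence $|P'| < |P|$ — wait, more precisely fewer blocks, so the CSP size strictly decreases. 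A secondary subtlety is ensuring the rerouting does not accidentally create a collision (two matches sharing a marker); this is ruled out because parallel re-matching is a bijection between $[x,x']$ and $[y,y']$ offset-by-offset, so disjointness is maintained. Once these pieces are assembled, optimality of $P$ is contradicted.
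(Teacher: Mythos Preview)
Your proposal has a genuine gap in both the construction of $P'$ and the block-count argument.

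First, the construction: when you replace $\up{u,\friendP(u)}$ by $\up{u,v}$ for $u\in(x,x']$, the old partner $\friendP(u)$ becomes unmatched. This marker lies in $S(y)$ but need not lie in $[y,y']$, so your ``symmetrically on the $S(y)$ side'' clause does not recover it. If $\friendP(u)$ is rare, $P'$ fails to be a CSP, and your claim that ``the set of deleted markers is unchanged'' is false. The fix is not a symmetric rerouting but a \emph{swap}: remove $\up{u,\friendP(u)}$ and $\up{v,\friendP(v)}$, and add $\up{u,v}$ together with $\up{\friendP(u),\friendP(v)}$. This is what the paper does, and it is what keeps the set of matched markers invariant.

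Second, and more seriously, even after this fix the inclusion $\eqvP\subseteq\eqvP'$ you rely on in step~(3) is false. The swap moves $\friendP(v)$'s partner from $v$ to $\friendP(u)$, and this can \emph{split} the block that contained $\friendP(v)$ in $P$. So the new block relation is not globally coarser; some adjacencies that were within a block in $P$ are broken in $P'$. A one-shot global rerouting over all of $(x,x']$ can therefore create as many new breakpoints (among the old partners) as it removes inside $[x,x']$, and your inequality does not follow.

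The paper's proof avoids this by working incrementally: choose $\up{x,y},\up{x',y'}$ at minimal distance, extend the block of $x$ by exactly one step (adding $\up{x_2,y_2}$) via a local swap, and observe that this merges one block into the block of $x$ while splitting at most one block elsewhere, so the block count does not increase. Iterating (via the minimality hypothesis) eventually merges the blocks of $x$ and $x'$, yielding a strict decrease. The careful per-step accounting is precisely what your global argument is missing.
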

\begin{proof}
  
	Aiming at a contradiction, assume that $P$ is optimal. Moreover, 
	assume without loss of generality that $S(x)=S(x')=S_1$, and that 
	$\up{x,y}$ and $\up{x',y'}$ have been chosen so as 
 to minimize 
 the distance between $x$ and $x'$,
 while satisfying the conditions of the lemma. 
	Since the
  candidate matches $\up{x,y}$ and $\up{x',y'}$ are parallel, we
  have~$[x,x']\equiv [y,y']$. Let~$\ell$ denote the number of markers
  in~$[x,x']$, let~$x_i$ denote the~$i$-th marker in~$[x,x']$ and
  let~$y_i$ denote the~$i$-th marker in~$[y,y']$. Then, each
  $\up{x_i,y_i}$ is a candidate match, $\up{x_i,y_i}$ and
  $\up{x_j,y_j}$ are parallel for all $1\leq i,j\leq \ell$, and, by
  the minimality of the distance between~$x$ and~$x'$,
  $\up{x_i,y_i}\notin P$ for $1<i<\ell$. Moreover, for all $1<i<\ell$, 
	$x\neqvP x_i\neqvP x'$ and $y\neqvP y_i\neqvP y'$.
	 Create a CSP $Q$, starting with $Q:=P$.
	

 If one of $x_2$, $y_2$ is deleted (say $x_2$, note that they cannot 
 both be deleted since they cannot both be abundant), then let $u_2:=\friendP(y_2)$.
 The pair~$\up{u_2,y_2}$ is the left-most candidate match of its block in $P$. 
 Remove $\up{u_2,y_2}$ from~$Q$ and add $\up{x_2,y_2}$, extending the block containing~$\up{x,y}$. 
Then $Q$ is also an optimal CSP.

If none of $x_2$, $y_2$ are deleted, then they are the left-most markers of 
blocks ending in $x_p$ and $y_q$ respectively 
(assume without loss of generality that 
$p\leq q$). Note that $p,q< \ell$, since these blocks are strictly contained between $x$ and $x'$ ($y$ and $y'$). 
Write $u_i=\friendP(y_i)$ for all $2\leq i\leq q$, and $v_i=\friendP(x_i)$ for all $2\leq i\leq p$. For each $2\leq i\leq p$, remove 
$\{\up{x_i,v_i},\up{y_i,u_i}\}$ from $Q$ and add 
 $\{\up{x_i,y_i},\up{u_i,v_i}\}$. 
Then $Q$ has no more blocks than $P$ and is an optimal CSP. 
Indeed, $[x_2,x_p]$ is now merged to the
block containing $x$, and $[u_2,u_q]$ is now split 
in two blocks $[u_2,u_p]$ and $[u_{p+1},u_q]$. 

In both cases, $Q$ is an optimal CSP where $\up{x_2,y_2}$ has been added to the block containing $\up{x,y}$. 
If  $x_2\eqvQ x'$, then the block containing  $\up{x,y}$ and  $\up{x_2,y_2}$ 
is merged with $\up{x',y'}$, and $Q$ has one block less than~$P$. 
Otherwise, $x_2\neqvQ x'$, and $Q$ satisfies the conditions of the lemma for $\up{x_2,y_2}$ and $\up{x',y'}$ with a smaller distance between $x_2$ and $x'$ than between $x$ and $x'$. 
Both cases lead to a contradiction.
\qed
\end{proof}

\subsection{The Sample Graph}
\label{sec:search-tree-sample-graph}
We now describe a multigraph that is associated with the current
sample~$T$. We will use the structure in this graph to identify cases
to which the branching applies. First, we describe the construction of
this graph.

Let $T$ be a sample for
an input instance~$(S_1,S_2,k)$, and~let~$C$ denote the set of all
candidate matches between~$S_1$ and~$S_2$.  The \emph{sample
  graph}~$G_T:=\{V_T,E_T\}$ of~$T$ is the following edge-colored
multigraph. The vertex set~$V_T$ is the set of markers of $S_1$ and
$S_2$. The edge multiset~$E_T\subseteq C$ consists of the black
edges~$E_T^b$, the green edges~$E_T^g$, and the red edges
$E_T^r$. The edge sets are defined as follows. The black
edges are the pairs of the sample\iflong:
\begin{displaymath}
  E_T^b:=T.
\end{displaymath}
\else
, that is, $E_T^b:=T$.
\fi
For the green and red edges, we use the following notation. For a
marker~$u\notin \Tflat$, let~$l_T(u)$ denote the rightmost vertex
from~$\Tflat$ that is in the same string as~$u$ and to the left
of~$u$. Similarly, let~$r_T(u)$ denote the leftmost vertex from~$\Tflat$
that is to the right of~$u$. Now, the green edge set
is \begin{align*}
  E_T^g:=\{\up{x,y}\in C\mid x,y\notin \Tflat & \wedge \up{l_T(x),l_T(y)}\in T\\  & \wedge \up{l_T(x),l_T(y)} \text{ is parallel to $\up{x,y}$} \}.
\end{align*} 
The red edge set is 
\begin{align*}
  E_T^r:=\{\up{x,y}\in C\mid x,y\notin \Tflat & \wedge \up{r_T(x),r_T(y)}\in T\\  & \wedge \up{r_T(x),r_T(y)} \text{ is parallel to $\up{x,y}$} \}.
\end{align*} 

\begin{figure}[t]
\centering
 \includegraphics[scale=0.9]{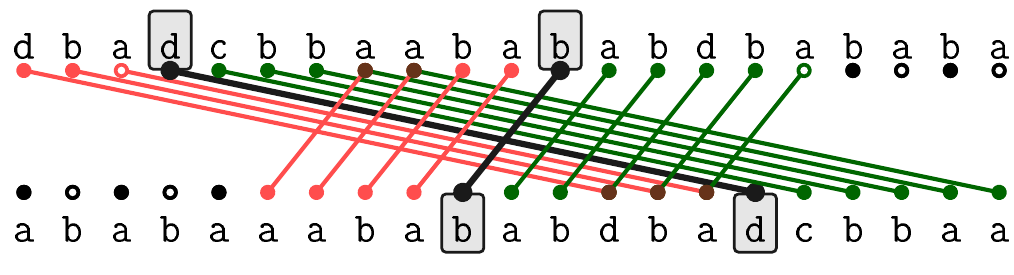}
\caption{\label{fig:sample-graph} Sample graph computed for two sequences, given a sample of two candidate matches (black edges), with green (dark gray) and red (light gray) edges. Note that {\tt a} and {\tt c} are rare in the top sequence, and {\tt b}, {\tt c} and {\tt d} are rare in the bottom sequence. Vertices satisfying the conditions of~\cref{brule:deg-zero,brule:odd-path} are marked with white dots (four are isolated rare vertices, two appear in a rare odd path).  }
\end{figure}

See \cref{fig:sample-graph} for an example. Clearly, $G_T$ is bipartite. From now on, we use the terms
``marker'' and ``vertex'' equivalently since there is a one-to-one
correspondence between them. Further, any definition applying to
candidate matches applies in a similar manner to edges.
The black-, green-, and red-degree of a vertex are the
number of black, green, and red edges incident with it. The degree of
a vertex is simply defined as the sum of the three colored degrees.
The sample graph has the following properties.
\begin{property}\label{prop:fillGaps}
  Let $\up{u,v}$ be a green (red) edge of $G_T$, then
  $\up{l(u),l(v)}$ ($\up{r(u),r(v)}$) is either a black or green (red)
  edge of~$G_T$.
\end{property}
\begin{proof}
  Consider the case that edge~$\up{u,v}$ is green. The property clearly
  holds if~$\up{l(u),l(v)}$ is black. Otherwise,~$\up{l(u),l(v)}$ also
  fulfills the conditions in the construction of~$E_T^g$:
  First,~$l(u)\neq l_T(u)$ and~$l(v)\neq l_T(v)$, thus they cannot
  belong to~$T$. Second,~$\up{l(u),l(v)}$ is to the left of~$\up{u,v}$
  and thus it is also parallel to~$\up{l_T(u),l_T(v)}$). \qed
\end{proof}
\begin{property}\label{prop:deg}
  Each vertex incident with a black edge has degree one. For each other vertex, green-degree and red-degree are at most one.
\end{property}
\begin{proof}
  First, let $\up{x,y}\in T$ be a black edge. By the definitions of~$E_T^g$
  and~$E_T^r$, neither~$x$ nor~$y$ is incident with a red or green
  edge. Since the sample~$T$ has only pairwise disjoint candidate
  matchings, there is no other black edge in~$T$ incident with
  either~$x$ or~$y$.

  Now, let~$e_1, e_2$ be two green edges incident with some
  vertex~$v$. Clearly,~$e_1$ and~$e_2$ fulfill the conditions in the
  definition of~$E_T^g$. Note that, by Property~\ref{prop:deg},~$l_T(v)$
  has degree one. Hence,~$e_1$ and~$e_2$ are parallel to the same
  edge. This implies $e_1=e_2$. The proof for red edges is
  symmetrical. \qed
\end{proof}

Property~\ref{prop:deg} implies that every vertex has degree at most two. Thus, each connected component is either a singleton, a path or a cycle.
\begin{property}\label{prop:parallelEdges}
Let $u$ and $u':=l(u)$ be two consecutive markers such that~$G_T$ contains the edges $\up{u,v}$ and $\up{u',v'}$. If both edges are green (both edges are red), then $\up{u,v}$ and $\up{u',v'}$ are parallel, that is, $v'=l(v)$.
\end{property}

\begin{proof}
  Assume that $\up{u,v}$ and $\up{u',v'}$ are
  green. By Property~\ref{prop:deg}, vertices incident with black edges
  have degree one. Hence,~$l_T(u)\neq u'$ and thus~$l_T(u) =
  l_T(u')$. Consequently, $\up{u,v}$ and $\up{u',v'}$ are parallel to
  the same edge~$\up{l_T(u),l_T(v)}$. Hence, they are also parallel to
  each other. The proof for red edges works analogously. \qed
\end{proof}

\subsection{Branching on Odd Connected Components}
\label{sec:search-tree-branch}
We now show some further properties that the sample graph~$G_T$ has
with respect to any CSP witnessed by the sample~$T$. We then exploit
these properties to devise branching rules that branch into~$O(d^2)$
cases.
Hence, consider an arbitrary CSP~$P$ witnessed by~$T$.
The following is a simple corollary
of \cref{lem:OptKeepParallels}, the construction of the sample graph, 
and the definition of witness.
\begin{lemma} \label{lem:doubleGR}
If $G_T$ contains two parallel black edges, then $P$ is not optimal.
\end{lemma}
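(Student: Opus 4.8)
The plan is to derive this directly from \cref{lem:OptKeepParallels}, since the hypotheses of that lemma are met almost by definition once we unfold what ``two parallel black edges'' means.

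First recall that the black edge set of the sample graph is $E_T^b=T$, so two parallel black edges of $G_T$ are nothing but two parallel candidate matches $\up{x,y},\up{x',y'}\in T$. By the definition of parallel candidate matches we may assume $S(x)=S(x')$ and that $x$ lies to the left of $x'$; moreover these two candidate matches are distinct, hence $x\neq x'$ (if $x=x'$, then disjointness of the candidate matches forming $T$ would force $\up{x,y}=\up{x',y'}$, contradicting that we have two edges). Since $T$ is a witness of $P$, condition~(1) of the definition of witness gives $T\subseteq P$, so in particular $\up{x,y}$ and $\up{x',y'}$ belong to $P$. Condition~(2) of the definition of witness, applied to the two distinct markers $x,x'\in\Tflat$, yields $x\neqvP x'$.

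At this point $P$ contains two parallel candidate matches $\up{x,y}$ and $\up{x',y'}$ with $S(x)=S(x')$ and $x\neqvP x'$, which are exactly the hypotheses of \cref{lem:OptKeepParallels}; invoking that lemma gives that $P$ is not optimal, as claimed.

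There is essentially no real obstacle here: the statement is a bookkeeping corollary combining the identity $E_T^b=T$, the two defining properties of a witness, and \cref{lem:OptKeepParallels}. The only point requiring a word of care is ruling out the degenerate possibility that the ``two'' parallel black edges coincide, which is handled by the disjointness of the candidate matches that make up the sample $T$.
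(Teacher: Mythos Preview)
Your proof is correct and follows exactly the approach indicated in the paper, which simply states that the lemma is a corollary of \cref{lem:OptKeepParallels}, the construction of the sample graph (i.e., $E_T^b=T$), and the definition of witness. Your write-up faithfully unpacks this one-line justification, including the minor care point about distinctness of the two edges.
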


The following lemma relates the colors that markers receive by the
CSP~$P$ to the edge colors in the sample graph. 
\begin{lemma} \label{lem:VEcolor} Let $u\in
  \seePT$ be a marker seen by~$T$. Then, there is at least one edge incident with $u$ in
  $G_T$. In particular, if vertex~$u$ is colored black/green/red,
  then~$\up{u,\friendP(u)}$ is a black/green/red edge in
  $G_T$.
\end{lemma}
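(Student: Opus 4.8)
The plan is to argue by cases on the color of $u$, unwinding the definitions of the coloring and of the sample graph. Let $x\in\Tflat$ be the unique black vertex with $u\eqvP x$; write $\up{x,y}\in T$ so that $y=\friendP(x)$ and, by the witness property, $y$ is the black vertex with $\friendP(u)\eqvP y$ (the block relation is preserved under $\friendP$, as noted after the definition of $\eqvP$). Set $v:=\friendP(u)$, so $\up{u,v}$ is a candidate match (rare markers aside, $u$ is seen hence not deleted). We must show $\up{u,v}$ is an edge of $G_T$ of the asserted color.

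First I would treat the black case: if $u$ is colored black then $u=x$, so $\up{u,v}=\up{x,y}\in T=E_T^b$, a black edge. Next the green case: $u$ is to the right of $x$ in $S(u)=S(x)$, and $v$ is to the right of $y$ in $S(v)=S(y)$. I would use \cref{lem:OptKeepParallels} together with the structure of blocks: since all markers of $[x,u]$ lie in one block of $P$, consecutive pairs have parallel matches, so $\up{x,y}$ and $\up{u,v}$ are parallel and $[x,u]\equiv[y,v]$. The key point is that $x=l_T(u)$ and $y=l_T(v)$: no marker of $\Tflat$ lies strictly between $x$ and $u$ (such a marker would be black, hence its own block, contradicting that $[x,u]$ is inside one block), and symmetrically for $y$. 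Since $\up{l_T(u),l_T(v)}=\up{x,y}\in T$ and it is parallel to $\up{u,v}$, and $u,v\notin\Tflat$, the pair $\up{u,v}$ satisfies exactly the defining conditions of $E_T^g$, so it is a green edge. The red case is symmetric, replacing $l_T$ by $r_T$ and ``left'' by ``right.'' Since $u\in\seePT$ is colored black, green, or red (the coloring is exhaustive on $\seePT$), one of the three cases applies, which also yields the first assertion that $u$ has an incident edge.

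The main obstacle I expect is the justification that $x=l_T(u)$ (and $y=l_T(v)$) in the green case — that is, that no sample marker interposes. This is really the content of non-redundancy of the witness combined with \cref{lem:OptKeepParallels}: if some $x'\in\Tflat$ satisfied $x<x'<u$ in the same string, then since $u\eqvP x$ and the block of $x$ is an interval, we would have $x'\eqvP x$, contradicting condition (2) in the definition of witness. So this step, while the crux, follows cleanly once the block-is-an-interval observation (built into the definition of $\eqvP$) is invoked. Everything else is bookkeeping with the definitions of $l_T$, $r_T$, and the parallelism condition; no calculation is required.
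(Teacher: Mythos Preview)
Your proposal is correct and follows essentially the same route as the paper: case split on the color of $u$, identify the black pair $\up{x,y}\in T$ with $u\eqvP x$, and use non-redundancy of the witness together with the fact that blocks are intervals to conclude $l_T(u)=x$ and $l_T(v)=y$ (and symmetrically for the red case). One minor point: the invocation of \cref{lem:OptKeepParallels} is unnecessary here---parallelism of $\up{x,y}$ and $\up{u,v}$ follows directly from $x\eqvP u$ via the remark after the definition of $\eqvP$, and the paper's proof does not use that lemma.
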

\iflong
\begin{proof}
  First, if $u$ is colored black, then~$\up{u,\friendP(u)}\in T$, and
  thus~$\up{u,\friendP(u)}\in E_T^b$. 
  Assume now that $u$ is colored green. 
	Since $u\in \seePT$, there exists $x\in
  \Tflat$ such that $u\eqvP x$.  Let $v=\friendP(u)$ and
  $y=\friendP(x)$, and note that $\up{x,y}\in T$.  Since~$u$ is green,
  it is to the right of $x$, and $\up{x,y},\up{u,v}$ are parallel.
	Moreover, using $ l_T(u)\eqvP u$, we have $l_T(u)\eqvP x$. And by the
  non-redundancy of~$T$, we have~$l_T(u)= x$. 
	By a similar
  argument,~$l_T(v)=y$. Hence,~$\up{u,v}$ fulfills the conditions of a
  green edge. The proof for red-colored vertices is analogous. \qed
\end{proof}
\fi
\begin{corollary} \label{cor:degree0}
  If some vertex $u$ has degree 0 in $G_T$, then $u\notin \seePT$.
\end{corollary}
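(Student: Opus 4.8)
The plan is to derive Corollary~\ref{cor:degree0} directly as the contrapositive of Lemma~\ref{lem:VEcolor}. The corollary claims that a vertex $u$ of degree $0$ in $G_T$ cannot be seen by $T$ in $P$; equivalently, every seen marker has positive degree. But that ``in particular'' portion is not even needed: Lemma~\ref{lem:VEcolor} already states outright that if $u\in\seePT$ then there is at least one edge of $G_T$ incident with $u$. So the proof is essentially one line.

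Concretely, I would argue as follows. Suppose, for contradiction, that $u$ has degree $0$ in $G_T$ but $u\in\seePT$. By Lemma~\ref{lem:VEcolor}, since $u$ is seen by $T$, there is at least one edge incident with $u$ in $G_T$, so $u$ has degree at least $1$ --- contradicting that its degree is $0$. Hence no such $u$ exists, i.e.\ $u\notin\seePT$ whenever $u$ has degree $0$. That is the whole argument.

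There is essentially no obstacle here: the corollary is a pure restatement (contrapositive) of an already-proved lemma, and the only thing to check is that "degree $0$'' and "no incident edge'' mean the same thing, which is immediate from the definition of degree as the sum of the three colored degrees. If one wanted a self-contained phrasing not invoking Lemma~\ref{lem:VEcolor}, one could instead observe that if $u\in\seePT$ then by definition there is $x\in\Tflat$ with $u\eqvP x$, and then either $u=x$ (so $\up{u,\friendP(u)}\in T=E_T^b$ is a black edge at $u$) or $u$ is strictly left/right of $x$ and $\up{u,\friendP(u)}$ satisfies the defining conditions of a red/green edge, again giving an edge incident with $u$; but this just reproves Lemma~\ref{lem:VEcolor}, so citing the lemma is cleaner.

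In the short version of the paper (where the proof of Lemma~\ref{lem:VEcolor} is omitted), the corollary should still be stated with a one-line justification of the form ``Immediate from Lemma~\ref{lem:VEcolor}, since a vertex of degree $0$ has no incident edge.'' I would write exactly that.
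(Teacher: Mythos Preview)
Your proposal is correct and matches the paper's own treatment: the paper states the corollary without proof, as an immediate consequence of Lemma~\ref{lem:VEcolor}, and your contrapositive argument is exactly the intended one-line justification.
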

Combined with \cref{lem:unseen} this leads to the first
branching rule.
\begin{brule}\label{brule:deg-zero}
  If the sample graph~$G_T$ contains a rare degree-0 vertex~$u$, then for
  each vertex~$v\notin \Tflat$ such that~$S(u)\neq S(v)$ and~$u\equiv
  v$ branch into the case to add~$\up{u,v}$ to~$T$.
\end{brule}
The branching rule above deals with connected components that are
singletons. Next, we develop a branching rule for connected
components that are a certain type of path in the sample graph.

To this end, we distinguish the following types of paths. A \emph{black
  path} is a path containing exactly one black edge. An \emph{odd
  path} is a path with an odd number of vertices. An \emph{even path}
is a path with an even number of vertices. Note that
by \cref{prop:deg}, all black edges are contained exclusively
in black paths. Furthermore, also by \cref{prop:deg}, the colors of
each other path alternate between green and red. We thus call an even path
\emph{green} if it starts and ends with a green edge and \emph{red},
otherwise.
An odd path is \emph{abundant} if its first marker is abundant, \emph{rare} otherwise (this definition is not ambiguous since the markers 
at the ends of an even path correspond to the same letter in the same string, they are thus both abundant or both rare).
 
\begin{lemma} \label{lem:oddPath} Let~$T$ be a sample witnessing a
  CSP~$P$. If a connected component of the sample graph~$G_T$
  is a rare odd path $(u_1,v_1, \ldots, v_{\ell-1},u_\ell)$, then
  there is some~$u_i$, $1\leq i\leq \ell$, such that $u_i\notin \seePT$ and
  $u_i$ is not deleted in $P$.
\end{lemma}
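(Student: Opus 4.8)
The plan is to argue by contradiction: suppose every marker $u_i$ on the rare odd path that is not deleted in $P$ satisfies $u_i \in \seePT$, and derive that this forces an optimality violation or a structural impossibility. First I would set up the local picture. The path is $(u_1, v_1, u_2, v_2, \ldots, v_{\ell-1}, u_\ell)$ with the $u_i$ in one string (say $S_1$, since the path is rare we may assume a rare letter) and the $v_i$ in the other. An odd path has an odd number of vertices, so there is no black edge on it (black paths have even length, being a single black edge with green/red arms that would make the total vertex count even — I should double-check this parity claim using \cref{prop:deg}: a black path is $\cdots$ green $\cdots$ (green edges), then one black edge, then (red edges) $\cdots$; counting vertices, hmm, actually the key point is just that a path containing a black edge cannot be an odd path here because of how black edges force degree one — let me instead just note that none of the edges on this component is black, since if one were, by \cref{prop:deg} its endpoints would have degree one, making the component a single black edge, which has two vertices, an even path, contradiction). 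So all edges alternate green/red, and since it is an odd path with an even number of edges, it starts and ends with edges of the same color; because the first marker $u_1$ is rare and the path is a full connected component, the pattern at each end is determined.

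The core of the argument: by \cref{lem:VEcolor} and its setup, if $u_i \in \seePT$ then $\up{u_i, \friendP(u_i)}$ is an edge of $G_T$ of the color matching $u_i$'s color under $P$. Since the $u_i$ here are not incident to black edges, each seen-and-not-deleted $u_i$ must be green or red, and $\friendP(u_i) = v_{i-1}$ or $v_i$ accordingly (the unique green, resp. red, neighbor of $u_i$ on the path, using \cref{prop:deg}). Now I count: there are $\ell$ markers $u_1, \ldots, u_\ell$ but only $\ell-1$ markers $v_1, \ldots, v_{\ell-1}$. If all $u_i$ were seen and not deleted, then $\friendP$ restricted to $\{u_1, \ldots, u_\ell\}$ would be an injective map (the $\friendP$ partners are distinct since $P$ is a set of disjoint candidate matches) into $\{v_1, \ldots, v_{\ell-1}\}$ — but wait, that is not quite right, since in principle $\friendP(u_i)$ could be a marker outside the path. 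The point is that if $u_i$ is seen, by the color argument its $P$-partner is forced to be its green or red neighbor \emph{in the sample graph}, and those neighbors are exactly the $v_j$'s on this path. So we would need an injection from an $\ell$-element set into an $(\ell-1)$-element set, a contradiction. Hence some $u_i$ fails to be both seen and not-deleted; since it is $u_i$, I still need to handle the "deleted" alternative.

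So the remaining wrinkle is: the lemma claims some $u_i$ is simultaneously \emph{unseen} and \emph{not deleted}. The counting above gives some $u_i$ that is either unseen or deleted. I would close the gap by observing that the path is \emph{rare}, so $u_1$ (hence all $u_i$, since they all correspond to the rare letter of the same string — actually only the endpoints are guaranteed rare by the parenthetical in the even-path definition; for odd paths the relevant statement is about the first marker) — more carefully: a rare marker is never deleted in a CSP, by the definition of CSP (a CSP contains all rare markers). So if the $u_i$ that the counting produces happens to be rare, it is automatically not deleted, and being "unseen or deleted" collapses to "unseen", giving exactly the conclusion. The main obstacle, then, is pinning down that the relevant $u_i$ (or enough of the $u_i$) are rare: I need that on a rare odd path, \emph{all} the $u_1, \ldots, u_\ell$ are rare markers, not merely the endpoints. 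I expect this follows because the sample graph edges on such a path come from parallel candidate matches all parallel to a common sample edge, so consecutive $u_i$'s are at consecutive or near-consecutive positions and correspond to the same letter — in fact the whole path of $u_i$'s lies in one block-like interval of $S_1$, so they all have the same letter, which is rare. If instead only the endpoints are rare, I would refine the counting: at each rare endpoint, the forcing from \cref{lem:VEcolor} still applies and still consumes a distinct $v_j$, and I push the injectivity/parity argument hard enough (possibly invoking \cref{lem:OptKeepParallels} or \cref{lem:doubleGR} to rule out degenerate configurations) to land on an unseen rare endpoint. Writing the final proof cleanly will mostly be about choosing which of these two bookkeeping routes is shortest given what \cref{prop:deg,prop:parallelEdges} already supply about the path's local geometry.
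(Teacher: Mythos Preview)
Your pigeonhole argument is correct and is actually cleaner than the paper's. The paper observes that the first and last edges of the odd path have \emph{different} colors (you wrote ``same'', which is wrong: an even number of alternating edges starts green and ends red), then uses this to pin $u_1$ as green and $u_\ell$ as red, and locates an explicit index $i$ where $u_i$ is green and $u_{i+1}$ is red, so both map under $\friendP$ to the same $v_i$. Your version bypasses this: once you know each seen $u_i$ has $\friendP(u_i)\in\{v_{i-1},v_i\}$ (via \cref{lem:VEcolor} and \cref{prop:deg}) and that $\friendP$ is injective, pigeonhole from $\ell$ into $\ell-1$ finishes it immediately, with no need for the parity observation at all. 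So your erroneous parity remark is harmless --- just delete it.

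Your real uncertainty is whether \emph{all} the $u_i$ are rare, and here you are overcomplicating. Every edge of $G_T$ is a candidate match, so all vertices on the path share the same letter; the graph is bipartite, so all $u_i$ lie in the same string $S(u_1)$. Hence every $u_i$ has the same letter in the same string as $u_1$, and since $u_1$ is rare (the path is rare), every $u_i$ is rare. By the definition of a CSP, no rare marker is deleted. This dissolves the ``unseen or deleted'' wrinkle in one line; there is no need to fall back on \cref{lem:OptKeepParallels} or \cref{lem:doubleGR}, nor to argue about consecutive positions.
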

\iflong
\begin{proof}
  An odd path has an even number of edges, with alternating colors,
  hence the first and last edges have different colors.  Without loss
  of generality, we assume that the first edge~$\up{u_1,v_1}$ is green
  and the last edge~$\up{v_{\ell-1},u_\ell}$ is red. Note that since the path is rare,
	all markers $u_i$ are rare, hence no $u_i$ is  deleted in $P$.

  Assume towards a contradiction that~$u_i\in\seePT$ for
  all~$i$. Since no~$u_i$ belongs to~$\Tflat$, each~$u_i$ is either
  colored green or red by~$P$. Marker~$u_1$ is not incident with any
  red edge, and marker~$u_\ell$ not incident with any green edge. By
  \cref{lem:VEcolor}, this implies that~$u_1$ is colored green
  and~$u_\ell$ is colored red. Consequently, there exists some~$i$, $1\leq i< \ell$, such
  that~$u_i$ is colored green and~$u_{i+1}$ is colored red. Now,
  $\up{u_i,v_i}$ is the only green edge incident with~$u_i$. Hence,
  $v_i=\friendP(u_i)$. The edge $\up{v_i, u_{i+1}}$ is, however, the
  only red edge incident with $u_{i+1}$, so
  $v_i=\friendP(u_{i+1})$. This contradicts the property that the
  candidate matches of a CSP are disjoint.
\end{proof}
\fi
\begin{brule}\label{brule:odd-path}
  If the sample graph~$G_T$ contains a connected component which is a
  rare odd path~$(u_1,v_1,u_2,v_2, \ldots, v_{\ell-1},u_\ell)$, then do the following for each vertex~$u_i$,~$1\le i\le \ell$: 
	for each vertex~$x\notin \Tflat$ such that~$S(u_i)\neq S(x)$
  and~$u_i\equiv x$ branch into the case to add~$\up{u_i,x}$ to~$T$.
\end{brule}

\subsection{Solving Instances without Rare Odd Paths or Singletons}\label{sec:construct-opt}
We now show how to find an optimal CSP in the remaining cases. As we will show, the edge set defined as follows gives such an optimal CSP. See~\cref{fig:graph-pt} for an example.
\iflong
\begin{proof}
 Assume that $u$ belongs to a green even path
 $u_0,u_1,\ldots,u_\ell$. Note that each $u_i$ has green-degree~1,
 and red-degree 1 iff  $i\neq 0$ and $i\neq \ell$.
 Let $v_i:=l(u_i)$ for all $1\leq i\leq \ell$. 
 We claim that for any $i$, if there exists an edge $\up{v_i,w}$, then $w=v_{i-1}$
or $w=v_{i+1}$. 
Moreover,~$v_i$ has green-degree or black-degree~1. 
First for green edges: Since for any $i$ there exists a green edge 
$\up{u_i,u_{i-1}}$ or $\up{u_i,u_{i+1}}$, then by 
\cref{prop:fillGaps,prop:parallelEdges} there exists the corresponding 
green or black edge $\up{v_i,v_{i-1}}$ or $\up{v_i,v_{i+1}}$. 
The claim follows from the fact that $v_i$ has green- and  
black-degree at most 1.
Then for red edges: if there exists a red edge $\up{v_i,w}$, then there exists 
a red (or black) edge $\up{u_i,r(w)}$. Since~$u_i$ can only be 
connected to $u_{i-1}$ or $u_{i+1}$, 
then $w=l(u_{i-1})=v_{i-1}$ or $w=l(u_{i+1})=v_{i+1}$. 

Write $C$ for  the connected component containing $l(u)$. The vertices of~$C$ are included in $\{v_i\mid 0\leq i\leq \ell\}$,
and thus they all  have green-degree 1 or black-degree 1.  
Thus $C$ is either a green path, a black path, or a cycle.
It cannot be a cycle: Otherwise, the component $\{u_0,u_1,\ldots u_\ell\}$ 
would contain the cycle formed by the edges parallel to the ones in $C$, 
which is impossible since $\{u_0,u_1,\ldots u_\ell\}$ is a path.
\qed\end{proof}
\fi

\begin{definition} \label{def:PT} Let~$G_T$ be a sample graph. The
  set~${P_T}$ is the edge set containing
  \begin{itemize}
  \item all black edges from~$G_T$, 
  \item each green edge that is in a green path, in an odd path or in a cycle, and
  \item each red edge that is in a red path.
  \end{itemize}
\end{definition}

\begin{lemma}\label{lem:pt-optimal}
Let~$T$ be a sample such that~$G_T$ does not contain isolated vertices, parallel black edges, or rare odd paths. 
Then,~${P_T}$ is a CSP for which~$T$ is a complete witness. 
\end{lemma}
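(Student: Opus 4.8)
The plan is to first classify the connected components of $G_T$ and then verify the two witness conditions and completeness component by component. By \cref{prop:deg}, every vertex of $G_T$ has degree at most two and every vertex incident with a black edge has degree one; hence each black edge is an isolated edge, and since $G_T$ has no isolated vertices, every other component is a path or a cycle whose edges alternate between green and red, that is, a green even path, a red even path, an odd path, or a cycle. For each type one reads off the edges that $P_T$ keeps: the single edge of a black component; in a green even path the green edges, which form a perfect matching of that component; symmetrically in a red even path the red edges; in a cycle the green edges, again a perfect matching; and in an odd path the green edges, which form a matching covering every vertex except the endpoint incident with a red edge. Since $P_T\subseteq E_T$, all these edges are candidate matches, and by the above they are pairwise disjoint; moreover the only markers they fail to cover are the said exceptional endpoints of odd paths. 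As \emph{rare} odd paths are excluded by hypothesis, every odd path is abundant, so each such endpoint is abundant by the definition of an abundant odd path; hence $P_T$ leaves no rare marker uncovered and is therefore a CSP.

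Next I would check that $T$ is a witness of $P_T$. Since $E_T^b=T$ and $P_T$ contains all black edges, $T\subseteq P_T$; as $P_T$ is a set of disjoint candidate matches this yields $\friendPG(x_i)=y_i$ for all $i$, which is condition~(1). For condition~(2), suppose for contradiction that $x,y\in\Tflat$ are distinct with $x\eqvPG y$. Then $x$ and $y$ lie in the same block, hence in the same input string, say with $x$ to the left of $y$, so there is a chain $x=z_0,\,z_1=r(z_0),\,\ldots,\,z_n=y$ of consecutive markers, all lying in that block, with $z_i\eqvPG z_{i+1}$ for each $i$. By \cref{prop:deg}, $x$ and $y$ have degree one in $G_T$, so their unique $P_T$-edges are their black edges, say $\up{x,x'}$ and $\up{y,y'}$. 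Since consecutive markers of a block have parallel $P_T$-edges, and this parallelism propagates along such a chain (if $\up{a,b}$ and $\up{r(a),c}$ are parallel then $c=r(b)$), we obtain that $\up{x,x'}$ and $\up{y,y'}$ are parallel --- two parallel black edges, contradicting the hypothesis. Thus $T$ is a witness of $P_T$.

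It then remains to prove that $T$ is \emph{complete}, i.e.\ that every block of $P_T$ contains a marker of $\Tflat$; equivalently, that from every non-deleted marker one can reach, along a chain of consecutive markers whose $P_T$-edges are pairwise parallel, a marker incident with a black edge. If a non-deleted marker $u$ lies in a green $P_T$-edge $\up{u,v}$, then by \cref{prop:fillGaps} the pair $\up{l(u),l(v)}$ is a black or a green edge of $G_T$: if it is black we are done, since it is then parallel to $\up{u,v}$ and $l(u)\in\Tflat$; and if it is green, the structural fact established just above \cref{def:PT} --- a green even path shifts to the left into a green even path or an isolated black edge, together with its left/right mirror and its counterparts for odd paths and cycles --- guarantees that $\up{l(u),l(v)}$ is again a $P_T$-edge, so that by \cref{prop:parallelEdges} the step is parallel and one may iterate. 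By finiteness of the strings this walk ends at a black edge; the case of a red $P_T$-edge is symmetric, walking to the right instead. Hence every block meets $\Tflat$; and since $P_T$ contains every rare marker, every rare marker lies in a block containing a marker of $\Tflat$ equivalent to it, so $T$ sees every rare marker and is complete.

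I expect the completeness step to be the main obstacle: a leftward walk along green edges stays inside $P_T$ only as long as every green edge it meets belongs to a green even path, an odd path, or a cycle, and never to a red even path (whose interior green edges are \emph{not} in $P_T$), and it is precisely the structural shift lemma above --- and its analogues for odd paths and cycles --- that is needed to rule this out and to make the argument case-complete. In comparison, disjointness, the property that $P_T$ contains all rare markers, and the non-redundancy of $T$ are all short once the component classification is available.
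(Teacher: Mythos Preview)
Your plan is correct and follows essentially the same route as the paper: classify the components of $G_T$, verify componentwise that $P_T$ is a CSP (pairwise disjoint candidate matches covering all rare markers), check $T\subseteq P_T$ and non-redundancy via the parallel-black-edges contradiction, and prove completeness by walking leftward along green $P_T$-edges (resp.\ rightward along red ones) until a black edge is hit, using the structural shift lemma to ensure each step stays inside $P_T$.

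One small clarification on the completeness step. You phrase the structural input as ``a green even path shifts left into a green even path or a black edge, together with its counterparts for odd paths and cycles''. You do not actually need separate forward statements for odd paths and cycles; what is needed is only that whenever $u$ carries a green $P_T$-edge, the marker $l(u)$ is \emph{not} in a red even path (so that the green/black edge $\up{l(u),l(v)}$ supplied by \cref{prop:fillGaps} belongs to $P_T$). This follows uniformly from the \emph{contrapositive} of the red version of the shift lemma: if $l(u)$ were in a red even path, then $r(l(u))=u$ would lie in a black or red even path, contradicting that $u$ is in a green even path, an odd path, or a cycle. The paper uses exactly this contrapositive. With that adjustment your argument is complete and matches the paper's.
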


\begin{figure}[t]
\centering
 \includegraphics[scale=0.9]{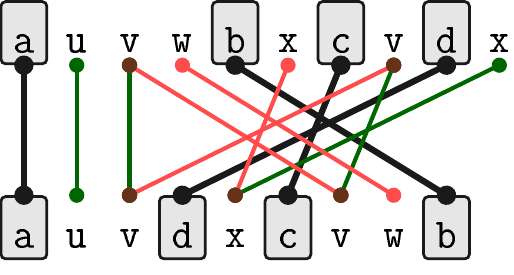}\hfill
 \includegraphics[scale=0.9]{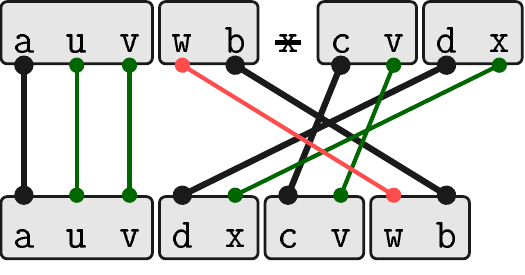}
\caption{\label{fig:graph-pt} Left: Sample graph $G_T$ with no isolated vertices, 
parallel black edges or rare odd paths. Right: CSP $P_T$ obtained from $G_T$ (\cref{def:PT,lem:pt-optimal}). 
Note that markers (with letter) {\tt u} form a green path, markers {\tt v} form a cycle, markers {\tt w} form a red path,
and markers {\tt x} form an abundant odd path.}
\end{figure}

\iflong
\begin{proof}
  We first show that ${P_T}$ is a CSP. Every rare vertex $u$ in $G_T$ is
  incident with exactly one edge in ${P_T}$: If $u\in \Tflat$, then
  $u$ is incident with exactly one edge which is black and thus
  contained in~${P_T}$. If~$u$ is in a green path, it is adjacent to
  exactly one green edge and~$P_T$ contains all green edges and no red
  edges of this path; the case where $u$ is in a red path is
  analogous. Finally, if~$u$ belongs to a cycle or to an abundant
	odd path (since $u$ is rare, it cannot be an endpoint of this path),
	then it is adjacent
  to a green and a red edge, the green edge is contained in ${P_T}$
  but the red edge is not.

  It remains to show that~$T$ is a complete witness for~$P_T$. To this
  end, we first show the following two claims.
  \begin{mathenum}
  \item\label{lem:equivRed} If $u$ is incident with a green (red)  edge
	  of $P_T$, then $l(u)$ ($r(u)$) exists and is incident with a black
    or green (red)  edge of $P_T$, and $u\eqvPG l(u)$ ($u\eqvPG r(u)$).  		
		\suspend{mathenum} 
		First if $u$ is incident with a green edge $\up{u,v}$.
		With \cref{prop:fillGaps}, marker $l(u)$ exists
    and has an incident black or green edge $\up{l(u),l(v)}$.
		Then by definition of $P_T$, $u$ does not belong to a red path,
		and by \cref{lem:copyInside}, $l(u)$ does not belong to a red path,
		so $\up{l(u),l(v)}\in P_T$, and $u\eqvPG l(u)$.
		Similarly if $u$ is incident with a red edge $\up{u,v}$, then
		by Property~\ref{prop:fillGaps}, marker $r(u)$ exists
    and has an incident black or red edge $\up{r(u),r(v)}$.
		Then by definition of $P_T$, $u$ belongs to a red path,
		and by \cref{lem:copyInside}, $r(u)$ belongs to a red path,
		so $\up{r(u),r(v)}\in P_T$, and $u\eqvPG r(u)$.
		
		%

    \resume{mathenum}
  \item\label{lem:complete} For any rare vertex $u$, there exists $x\in
    \Tflat$ such that $u\eqvPG x$.
  \end{mathenum}
  If $u$ belongs to a black edge, simply take $x=u$: then $x\in \Tflat$ and  $u\eqvPG x$.
Assume now that $u$ has an incident green edge in $P_T$ (the case where $u$ has an incident red edge is symmetrical).
Let $u'$ be the left-most marker such that $u'\eqvPG u$ and that $u'$ is incident with a green edge. Let $x:=l(u')$. 
Then by Claim~\ref{lem:equivRed}, $x$ exists, $x\eqvPG u'$, and it is incident with a green or black edge.
Thus, $x\eqvPG u$, and $x$ cannot be incident to a green edge (by definition of $u'$): $x$ is incident with a black edge, that is, $x\in\Tflat$.

%

We can now show that~$T$ is indeed a complete witness for~$P_T$.
First, $T\subseteq {P_T}$, since $T$ is the set of black edges.  It
satisfies the non-redundancy condition which can be seen as
follows. Assume that~$x\eqvPG x'$, then $\up{x,\friendPG(x)}$ and
$\up{x',\friendPG(x')}$ are parallel. These two candidate matches are
black edges, so $G_T$ contains parallel black edges: a contradiction
to the assumption of the lemma. Finally,~$T$ is complete since, by
Claim~\ref{lem:complete}, there exists $x\in \Tflat, x\eqvPG u$ for all
vertex $u$.  \qed
\end{proof}
\fi

\begin{algorithm}[t]
  \caption{{\small The fixed-parameter algorithm for parameter $(d,k)$.}}
  \label{algo:fpt-dk}
  \begin{noSpaceTabbing}
    \hspace*{0.7cm}\=\hspace*{0.5cm}\=\hspace*{0.5cm}\=\hspace*{0.6cm}\=\hspace*{4.3cm}\=\kill
    \texttt{MCSP}$(S_1,S_2,k,T)$\\
    \lii{1} \textbf{if} $|T|>k$ \textbf{abort branch}\\
    \lii{2} Compute the sample graph~$G_T$ \\
    \lii{3} \textbf{if}~$G_T$ contains parallel black edges \textbf{: abort branch}\\
    \lii{4} \textbf{else if}~$G_T$ contains an isolated vertex \textbf{:}\\
    \lii{5} \> \textbf{apply} \cref{brule:deg-zero}; in each case \textbf{call} \texttt{MCSP}$(S_1,S_2,k,T\cup \{\{u,v\}\})$ \\
    \lii{6} \textbf{else if}~$G_T$ contains a rare odd path \textbf{:}\\
    \lii{7} \> \textbf{apply} \cref{brule:odd-path}; in each case \textbf{call} \texttt{MCSP}$(S_1,S_2,k,T\cup \{\{u_i,x\}\})$\\
    \lii{8} \textbf{else} compute~$P_T$, \textbf{output~$P_T$}  
  \end{noSpaceTabbing}
\end{algorithm}

\begin{theorem}
  MCSP can be solved in~$O(d^{2k}\cdot kn)$ time.
\end{theorem}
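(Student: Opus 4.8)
The plan is to analyze \cref{algo:fpt-dk} in the standard search-tree fashion: establish correctness by showing that some branch reaches line~8 with a sample that is a complete witness of an optimal CSP, and bound the running time by bounding the branching factor and the search-tree depth. For correctness, I would first fix an optimal CSP $P^*$ and argue by induction that the algorithm maintains the invariant that, along at least one root-to-leaf path, the current sample $T$ is a non-redundant witness of $P^*$ (i.e., satisfies the conditions of \cref{{def:PT}}'s hypotheses via Lemmas~\ref{lem:unseen}, \ref{lem:OptKeepParallels}, \ref{lem:doubleGR}). Initially $T=\emptyset$, which is trivially a witness. At each internal node, if line~3 aborts, then by \cref{lem:doubleGR} the branch was not witnessing an optimal CSP, so the invariant-preserving branch is elsewhere; if line~5 fires on a rare isolated vertex $u$, then by \cref{cor:degree0} $u\notin\seePT$, and \cref{lem:unseen} gives a candidate match $\up{u,v}$ with $T\cup\{\up{u,v}\}$ still a witness of $P^*$ — since $v=\friendPstar(u)$ is among the branched choices, the invariant is preserved in that branch; the case of line~7 is identical using \cref{lem:oddPath} in place of \cref{cor:degree0}. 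When a leaf at line~8 is reached, $G_T$ contains no parallel black edges, no isolated vertices, and no rare odd paths, so \cref{lem:pt-optimal} applies and $P_T$ is a CSP with $T$ a complete witness; since the invariant-preserving leaf has $T\subseteq P^*$ and $T$ complete, every block of $P^*$ meets $T$, whence $P_T$ has exactly as many blocks as $P^*$, i.e.\ $|P_T|=|P^*|\le k$, and the algorithm outputs an optimal solution. One must also check that the branch maintaining the invariant never aborts at line~1, which holds because $|T|$ never exceeds the size of $P^*\le k$.

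For the running time, the two quantities to control are the branching factor and the depth. The depth is at most $k$: each branching step (line~5 or line~7) adds exactly one candidate match to $T$, and line~1 caps $|T|$ at $k$. The branching factor at each node is $O(d^2)$: in \cref{brule:deg-zero} we branch over all $v$ with $v\equiv u$, $v\notin\Tflat$, $S(v)\neq S(u)$, and there are at most $d$ such $v$ (the letter of $u$ occurs at most $d$ times in the other string); in \cref{brule:odd-path} we branch over each of the $\ell$ path vertices $u_i$ and, for each, over at most $d$ partners $x$ — but here I must argue that $\ell=O(d)$ as well, which follows because the $u_i$ all share the same letter (the path alternates green/red and, by \cref{prop:parallelEdges}, consecutive green edges and consecutive red edges are parallel, forcing all $u_i$ in one string to be occurrences of the same letter), giving at most $d$ of them; hence the branching factor is $O(d)\cdot d=O(d^2)$. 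Thus the search tree has $O(d^{2k})$ leaves. It remains to bound the work per node: computing $G_T$ (line~2), checking for parallel black edges, isolated vertices, and rare odd paths, and on a leaf computing $P_T$, can all be done in $O(kn)$ time — the sample graph has $O(n)$ vertices and, by \cref{prop:deg}, $O(n)$ edges, and locating $l_T,r_T$ and testing the parallelism conditions costs $O(n)$ per candidate match with $O(|T|)=O(k)$ bookkeeping, giving $O(kn)$ overall. Multiplying, the total is $O(d^{2k}\cdot kn)$.

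The main obstacle is the branching-factor bound for \cref{brule:odd-path}, specifically the claim that a rare odd path has only $O(d)$ vertices. This needs the observation that along such a path all vertices lying in the same input string carry the same letter — which I would derive from \cref{prop:parallelEdges} (consecutive same-color edges are parallel, so the letters shift in lock-step) together with the alternation of colors from \cref{prop:deg} — so each string contributes at most $d$ of the $u_i$'s and the path has at most $2d$ vertices; I should double-check the boundary edges (the first and last, which have different colors since the path is odd) do not spoil this count. A secondary point requiring care is the precise $O(kn)$ accounting for constructing $G_T$ and, at a leaf, for reading off the blocks of $P_T$ and verifying $|P_T|\le k$; this is routine but the data structures (indexing markers by letter, precomputing $l_T$/$r_T$ by a single left-to-right scan after each insertion) should be spelled out so the per-node cost is genuinely $O(kn)$ rather than, say, $O(kn\log n)$.
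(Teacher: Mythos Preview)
Your proposal is correct and follows essentially the same approach as the paper's proof: fix an optimal CSP, maintain the invariant that some branch keeps $T$ a witness of it, bound depth by $k$, branching factor by $d^2$, and per-node work by $O(kn)$. The only notable difference is that you overwork the bound $\ell\le d$ for \cref{brule:odd-path}: since every edge of $G_T$ is a candidate match, all vertices of a connected path share the same letter, so the $u_i$ (which lie in a single input string by bipartiteness) number at most $d$ immediately---\cref{prop:parallelEdges} is not needed here.
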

\begin{proof} We use the algorithm \texttt{MCSP} outlined in Algorithm~\ref{algo:fpt-dk}.
  We first show the correctness of \texttt{MCSP}, then we bound the running time.
  Consider a yes-instance, and let~$P$ be an optimal CSP of
  size~$k$. We show that~\texttt{MCSP$(S_1,S_2,k,\emptyset)$} outputs at least
  one CSP of size~$k$ in this case. Since~$T=\emptyset$~$T$ in the first call,~$T$ is initially a witness of~$P$. Combining~\cref{lem:unseen} with~\cref{lem:VEcolor,lem:oddPath} shows that the algorithm creates in each
  application at least one branch such that the set~$T$ is a witness
  of~$P$ in this branch.
  Now, note that if a branch is aborted because~$|T|>k$, then the
  current set~$T$ either is redundant (and thus not a sample) or any
  CSP that it witnesses has size at least~$k$, thus it does not
  witness~$P$ in this case. Similarly, if the graph~$G_T$ contains
  parallel black edges, then the set~$T$ is either redundant, or any
  CSP that it witness is not optimal; thus it does not witness~$P$.
  Hence, the algorithm eventually reaches a situation in which~$T$ is
  a witness of~$P$ and~$G_T$ contains no isolated vertices and no odd
  paths. Then it constructs and outputs a
  set~$P_T$. By~\cref{lem:pt-optimal},~$P_T$ is a CSP. Furthermore, it
  has size~$|T|$ and thus it is at most as large as~$P$ which also has
  size at least~$|T|$ since~$T$ is a witness for~$P$.

  Now, assume that the instance is a no-instance, then the algorithm
  has empty output since all CSPs that are output have size at
  most~$k$ due to the condition in Line~1 of the algorithm.

  It remains to bound the running time. We first bound the size of the search
  tree. After the application of each branching rule, the set~$T$ has
  contains one additional candidate match, so the depth of the search
  tree is at most~$k$ because of the check in Line~1 of the
  algorithm. We now bound the number of new cases for each branching
  rule. First,~\cref{brule:deg-zero} branches into at most~$d$ cases.
  Second,~\cref{brule:odd-path} branches into at most~$d^2$ cases: All
  vertices of a path have the same letter since edges are candidate
  matches. Hence, there are at most~$d$ $u_i$'s. For each of them the algorithm
  creates at most~$d$ branches. Hence, the overall search tree size
  is~$O((d^2)^k)=O(d^{2k})$. The time spent in each search tree node
  can be seen as follows. The sample graph can be
  constructed in~$O(kn)$ time by adding for each of the~$O(k)$ black edges
  the red and green edges in linear time. This is done by moving
  to the left and the right until either the next parallel marker pair
  is not a candidate match or contains a black vertex (this can also be used to find parallel black edges). The sample
  graph has size~$O(n)$, hence isolated vertices and odd paths
  can also be found in~$O(n)$ time.  \qed
\end{proof}

\section{Parameter Improvement}\label{sec:param-k'}
In this section, we show that the parameter~$k$ denoting the number of
blocks in an optimal solution can be replaced by a potentially much
smaller parameter~$k':=$``number of blocks without unique
letters''. Herein, a letter is called \emph{unique} if it appears at
most once in~$S_1$ and at most once in~$S_2$. To deal with the blocks
that contain unique letters we devise a simple rule for simplifying
the instance. The algorithm makes use of a data reduction rule. A data reduction rule is \emph{correct} if the new instance is a yes-instance if and only if the old one is.  An instance is \emph{reduced} with respect to a data reduction rule, if an application of the rule does not change the instance.
\begin{rrule}\label{rule:reduce-unique}
  If~the input contains a pair of unique letters~$x$ and~$x'$,
  where~$x'$ is to the right of~$x$, such that the candidate
  matches~$\up{x,y}$ for~$x$ and~$\up{x',y'}$ for~$x'$ are parallel,
  then replace~$[x,x']$ by~$x$ and~$[y,y']$ by~$y$.
\end{rrule}
\begin{lemma}\label{lem:rule1}
  \cref{rule:reduce-unique} is correct.
\end{lemma}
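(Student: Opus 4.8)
The plan is to prove the two directions of the yes-instance equivalence by turning a CSP of one instance into a CSP of the other with the same number of blocks. Assume throughout, without loss of generality, that $S(x)=S(x')=S_1$ (hence $S(y)=S(y')=S_2$), let $\ell\ge 2$ be the number of markers in $[x,x']$, and let $x=x_1,\dots ,x_\ell=x'$ and $y=y_1,\dots ,y_\ell=y'$ enumerate $[x,x']$ and $[y,y']$; parallelism of $\up{x,y}$ and $\up{x',y'}$ gives $x_i\equiv y_i$ for all $i$. Write $I$ for the original instance and $I'$ for the one produced by the rule, in which $[x,x']$ is collapsed to a single marker still called $x$ and $[y,y']$ to a single marker still called $y$ (so $x\equiv y$ still holds). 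I would first record two remarks. (a) Since $\up{x,y}$ is a candidate match, the (unique) letter of $x$ occurs exactly once in $S_1$ and once in $S_2$; hence $x$ is rare, and in \emph{every} CSP of $I$ its only possible partner is $y$. The same applies to $x',y'$, and in $I'$ to the collapsed marker $x$, whose letter is still unique. (b) Because $[x,x']\equiv [y,y']$, collapsing removes, for every letter $a$, the same number of occurrences of $a$ from $S_1$ as from $S_2$; consequently the difference between the number of occurrences of a letter in $S_1$ and in $S_2$ is the same in $I$ and in $I'$, so a marker lying outside $[x,x']\cup[y,y']$ is rare in $I$ if and only if the corresponding marker is rare in $I'$.

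For ``$I$ yes $\Rightarrow$ $I'$ yes'', take an \emph{optimal} CSP $P$ of $I$, so $P$ has at most $k$ blocks. By (a), $\up{x,y}\in P$ and $\up{x',y'}\in P$; these are parallel and lie in the same string, so \cref{lem:OptKeepParallels} (contrapositive) yields $x\eqvP x'$. Since blocks are intervals, all of $[x,x']$ lies in one block $B_1$ of $S_1$; unwinding the definition of $\eqvP$ together with the parallelism of $\up{x,y}$ and $\up{x',y'}$ shows that $x_i$ is matched in $P$ to $y_i$ for every $i$ and that the block $B_2$ matched to $B_1$ contains $[y,y']$. Now let $P'$ be obtained from $P$ by replacing the $\ell$ matches $\up{x_i,y_i}$ by the single match $\up{x,y}$ of $I'$, keeping every other match unchanged. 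Then $P'$ is a set of pairwise disjoint candidate matches of $I'$, and it covers every rare marker of $I'$: the collapsed marker $x$ is matched to $y$, the interior markers $x_2,\dots ,x_{\ell-1}$ and $y_2,\dots ,y_{\ell-1}$ no longer exist, and every remaining marker keeps its $P$-partner and, by (b), is rare in $I'$ exactly when it was rare in $I$. Moreover the set of deleted markers is unchanged (the collapsed interior markers lay in the \emph{matched} block $B_1$, resp.\ $B_2$, hence were never deleted). Finally, a short check from the definitions of the block relation and of parallelism at the two ends of the collapsed interval shows that collapsing neither merges nor splits any block: $B_1$ with $[x,x']$ shrunk to $x$ is still a single block of $P'$, matched to $B_2$ with $[y,y']$ shrunk to $y$, and every other block of $P$ survives verbatim. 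Hence $P'$ has the same number of blocks as $P$, so at most $k$, and $I'$ is a yes-instance.

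For the converse, let $P'$ be a CSP of $I'$ with at most $k$ blocks. By (a) applied in $I'$, the collapsed marker $x$ is rare, hence matched, necessarily to $y$, so $\up{x,y}\in P'$. Build $P$ from $P'$ by replacing $\up{x,y}$ with the $\ell$ matches $\up{x_i,y_i}$ (a legitimate choice since $[x,x']\equiv [y,y']$ makes each $\up{x_i,y_i}$ a candidate match of $I$), keeping all other matches. The mirror image of the previous paragraph — rare-marker coverage via (b), invariance of the deleted set, and the same junction check showing that re-expanding $[x,x']$ and $[y,y']$ keeps $B_1$ and $B_2$ as a single pair of matched blocks — shows that $P$ is a CSP of $I$ with the same number of blocks as $P'$, hence at most $k$. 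Therefore $I$ is a yes-instance.

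I expect the crux to be the forward direction, and specifically the use of \emph{optimality}: an arbitrary CSP of size at most $k$ need not place $x$ and $x'$ in the same block, so one must first pass to an optimal CSP in order to invoke \cref{lem:OptKeepParallels}; after that the remaining work is bookkeeping, namely that collapsing the matched interval $[x,x']$ (and $[y,y']$) preserves rareness of every other marker — which hinges precisely on $[x,x']\equiv [y,y']$ forcing equal numbers of each letter to be removed from the two strings — and preserves the number of blocks.
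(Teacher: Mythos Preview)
Your proof is correct and follows essentially the same approach as the paper: in the forward direction you pass to an \emph{optimal} CSP, use uniqueness to force $\up{x,y},\up{x',y'}\in P$, invoke \cref{lem:OptKeepParallels} to get $x\eqvP x'$, and then collapse; in the converse you expand $\up{x,y}$ to all $\up{x_i,y_i}$ using $[x,x']\equiv[y,y']$. Your remark~(b) on rareness preservation is more explicit than the paper's treatment, which glosses over why the collapsed/expanded set is again a CSP, but the overall structure and the key ideas are the same.
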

\iflong
\begin{proof}
  Let~$\tilde{I}$ denote the instance obtained by one application of
  \cref{rule:reduce-unique}. We show that~$I$ is a yes-instance if and
  only if~$\tilde{I}$ is a yes-instance.

  Consider an optimal CSP~$P$ for~$I$. Then, since~$\up{x,y}$
  and~$\up{x',y'}$ are the only candidate matches for~$x$ and~$x'$, $P$
  contains~$\up{x,y}$ and~$\up{x',y'}$. By~\cref{lem:OptKeepParallels}
  this implies that~$x\eqvP x'$. Hence, the subset~$\tilde{P}$ of~$P$ that
  contains only the candidate match~$\up{x,y}$ from all candidate
  matches of~$P$ that contain one marker from~$[x,x']$ and one marker
  from~$[y,y']$ is a CSP for~$\tilde{I}$. Moreover, it has the same number
  of blocks as~$P$. The only point where a new breakpoint could occur
  is to the right of~$x$. If~$\up{r(x'),r(y')}$ is not contained
  in~$P$, then there is already a breakpoint between~$x'$ and~$r(x')$
  in~$I$. Otherwise,~$\up{r(x'),r(y')}$ is contained in~$\tilde{P}$. It is
  furthermore parallel to~$\up{x,y}$ in~$\tilde{I}$. Hence, there is no
  breakpoint between~$x$ and~$r(x)$ in~$\tilde{P}$.

  For the converse, let~$\tilde{P}$ be an optimal CSP
  for~$\tilde{I}$ and note that~$\up{x,y}\in \tilde{P}$. Let~$x_i$
  and~$y_i$ denote the~$i$-th markers in~$[x,x']$ and~$[y,y']$,
  respectively, and let~$\ell$ denote the number of markers in~$[x,x']$. Then, adding~$\up{x_i,y_i}$,~$1<i\le \ell$,  to~$\tilde{P}$ gives a
  CSP for~$I$ since, by the assumption that~$\up{x,y}$
  and~$\up{x',y'}$ are parallel we have~$x_i\equiv y_i$. Furthermore,
  all candidate matches that are added are parallel
  to~$\up{x,y}$. Hence, the number of blocks does not increase.\qed
\end{proof}
\fi
After this simplification, the resulting instance has the property
that candidate matches between two different unique matches are in
different blocks. This implies that a set~$T$ containing all different
unique matches is a sample witnessing any optimal CSP. This leads to the following.
\begin{theorem}\label{thm:fpt-k'}
  MCSP can be solved in~$O(d^{2k'}\cdot kn)$~time where~$k'$
  denotes the number of blocks in~$S_1$ that contain no unique letter.
\end{theorem}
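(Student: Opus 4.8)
The plan is to combine \cref{rule:reduce-unique} with the algorithm of the previous section, but starting the search tree from a nonempty sample instead of from~$\emptyset$. First I would apply \cref{rule:reduce-unique} exhaustively; by \cref{lem:rule1} this is correct, and it can be done in~$O(kn)$ time since each application shrinks the instance and the candidate structure can be maintained in linear time. After reduction, the key structural observation is that for every unique letter appearing in both~$S_1$ and~$S_2$ there is exactly one candidate match, call it a \emph{unique match}; let~$T_0$ be the set of all unique matches. I claim~$T_0$ is a sample (its candidate matches are pairwise disjoint, since distinct unique letters occupy distinct positions) and, crucially, that it is a \emph{non-redundant} sample with respect to every optimal CSP~$P$: since all unique matches belong to~$P$ (they are forced), if two markers~$x,x'\in\mathcal{M}(T_0)$ from the same string satisfied~$x\eqvP x'$ then the two parallel candidate matches $\up{x,\friendP(x)}$ and $\up{x',\friendP(x')}$ would be parallel unique matches, contradicting the fact that the instance is reduced with respect to \cref{rule:reduce-unique}. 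Hence~$T_0$ is a witness of every optimal CSP.

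Next I would run \texttt{MCSP}$(S_1,S_2,k,T_0)$ instead of \texttt{MCSP}$(S_1,S_2,k,\emptyset)$. Correctness is inherited verbatim from the proof of the previous theorem: the only property of the initial call used there is that~$T$ is a witness of the optimal CSP~$P$, which we have just established for~$T_0$; all the invariants (\cref{lem:unseen}, \cref{lem:VEcolor}, \cref{lem:oddPath}, \cref{lem:pt-optimal}) are maintained unchanged as candidate matches are added. For the running time, the depth of the search tree is now bounded by~$k - |T_0|$ plus the number of additional branching steps, and the point is that \cref{brule:deg-zero} and \cref{brule:odd-path} only ever branch on \emph{rare} degree-zero vertices or \emph{rare} odd paths, and a rare marker that corresponds to a unique letter is already seen by~$T_0$ (it lies on a black edge), so no branching step ever adds a candidate match lying in a block that consists solely of unique letters. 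Consequently every branching step increases the number of blocks of~$T$ that contain a non-unique letter, and there are at most~$k'$ such blocks; the search tree therefore has depth at most~$k'$ and size~$O((d^2)^{k'}) = O(d^{2k'})$. Multiplying by the~$O(kn)$ per-node cost (construction of the sample graph, detection of parallel black edges, isolated vertices, rare odd paths, and construction of~$P_T$) and adding the~$O(kn)$ preprocessing cost gives the claimed~$O(d^{2k'}\cdot kn)$ bound.

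The main obstacle I expect is making precise the claim that ``no branching step touches a block of unique-only markers,'' i.e.\ that the branching depth is genuinely bounded by~$k'$ rather than by~$k$. This requires a clean invariant: along any root-to-node path in the search tree for which~$T$ remains a witness of the fixed optimal CSP~$P$, the black edges of~$T$ lie in pairwise distinct blocks of~$P$ (by non-redundancy), the blocks hit by~$T_0$ are exactly the blocks containing a unique letter (using that the instance is reduced, so at most one unique match per block), and every block branched into afterwards is a block not yet hit by~$T$, hence not already hit by~$T_0$, hence contains a non-unique letter. One should also double-check the boundary case where a ``block'' of~$P$ contains a unique letter but was split off by the reduction into a singleton unique match versus a longer block — but since \cref{lem:rule1} preserves the block count and the forced-match argument pins down~$\friendP$ on unique markers, this causes no trouble. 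The remaining details (that~$T_0$ is computed in~$O(kn)$ time, that starting from~$T_0$ rather than~$\emptyset$ does not invalidate the Line-1 abort condition, which still correctly discards samples witnessing only CSPs of size~$>k$) are routine.
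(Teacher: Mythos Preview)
Your proposal is correct and follows the same approach as the paper: exhaustively apply \cref{rule:reduce-unique}, initialize $T_0$ with all unique matches (a witness of every optimal CSP, since reducedness rules out parallel unique matches), and run the branching algorithm from~$T_0$. The paper's depth argument is marginally simpler---it just notes that the Line-1 abort bounds the depth by $k-|T_0|$ on \emph{every} branch and that $|T_0|$ equals the number of blocks containing a unique letter, so $k-|T_0|=k'$---whereas your ``each branching hits a new block without a unique letter'' reasoning establishes the same identity but, as you yourself flag in the obstacle paragraph, only along a branch where $T$ stays a witness of~$P$.
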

\iflong
\begin{proof}
  The algorithm first exhaustively applies \cref{rule:reduce-unique},
  which can be clearly done in polynomial time. Afterwards, an
  algorithm similar to~\texttt{MCSP} is executed. The only difference
  is that~$T$ is initialized as the set containing for each unique
  letter the candidate match (if such a match exists). Since the
  instance is reduced with respect to~\cref{rule:reduce-unique}, the
  set~$T$ is a non-redundant sample. Since each unique letter that has
  a candidate match must be matched to this candidate match in any
  CSP, the set~$T$ witnesses any CSP of the reduced instance.

  Now, the algorithm is run exactly as before and outputs a CSP of
  size at most~$k$ that is witnessed by the initial sample~$T$ if such
  a CSP exists. Otherwise, there is no CSP of size at most~$k$
  witnessing~$T$. By the correctness of~\cref{rule:reduce-unique} and
  the fact that~$T$ witnesses all CSPs of the reduced instance, this
  implies that there is no size-$k$ CSP for the original instance.

  It remains to bound the running time.  The algorithm branches
  until~$|T|>k$. In each branching,~$|T|$ increases by one. Let~$t$
  denote the size of the initial~$T$. Then, the depth of the search
  tree is~$k-t$, which by definition of~$t$ is equivalent to~$k'$ (the
  CSP contains~$t$ blocks with unique markers). The number of cases
  created by each branching is again at most~$d^2$, the overall
  running time follows. \qed
\end{proof}
\fi

\section{Data Reduction Rules}\label{sec:data-red}

In addition to the improvements described in previous sections which
lead to an improved worst-case running time bound, we also devise the
following data reduction rules. These rules proved crucial for solving
larger instances of MCSP and may be of independent interest. The first
of these reduction rules identifies unique letters that are in~$S_1$
and~$S_2$ surrounded by other unique letters.
\begin{rrule}\label{rule:unique-border}
  If the instance contains a unique candidate match~$\{u,v\}$ and the
  letters to the right and left of~$u$ and~$v$ are also unique, then
  let~$L(u)$,~$R(u)$,~$L(v)$, and~$R(v)$ denote the uniquely defined
  candidate matches containing the left and right neighbor of~$u$
  or~$v$. Remove~$u$ and~$v$ from~$S_1$ and~$S_2$ and do the
  following.
  \begin{itemize}
  \item If~$L(u)= L(v)$ or~$R(u)=R(v)$ leave~$k$ unchanged.
  \item Else,  check whether removing~$u$
    and~$v$ from~$S_1$ and~$S_2$ made either~$L(u)$ and~$R(u)$
    parallel or~$L(v)$ and~$R(v)$ parallel. If it makes none of the
    two parallel, then decrease~$k$ by one, if it makes exactly one pair
    parallel, decrease~$k$ by two, otherwise decrease~$k$ by three.
   \end{itemize}
\end{rrule}
\begin{proof}[of correctness]
  In the first case,~$\{u,v\}$ is parallel to either~$L(u)$ or~$R(u)$
  and thus the rule is simply a special case of the parallel rule. In
  the other cases,~$\{u,v\}$ is parallel to none
  of~$L(u),R(u),L(v),R(v)$. Hence,~$u$ and~$v$ will be in a block of
  size one in any CSP. In case the removal of~$\{u,v\}$ makes no other
  edges parallel, the minimum size of a CSP in the reduced instance
  thus is one less. Hence, the parameter decrement is correct in this
  case. If the removal of~$u$ and~$v$ makes only~$L(u)$ and~$R(u)$
  parallel, then the minimum size of a CSP after removing~$u$ is
  decreased by exactly two: Consider any CSP of the original instance,
  ``merging'' the blocks containing the left and the right neighbor
  of~$u$ and removing the blocks containing~$u$ and~$v$ gives a CSP
  for the reduced instance with size decreased by two. Similarly,
  re-adding~$\{u,v\}$ to any CSP of the reduced instance increases the
  size by exactly two.  By symmetry, the same holds for the case that
  the removal of~$u$ and~$v$ makes only~$L(v)$ and~$R(v)$ parallel.

  Finally, if the removal makes $L(u)$ and~$R(u)$ parallel and $L(v)$
  and~$R(v)$ parallel, then the size of the minimum CSP decreases by
  exactly three which follows from the above arguments with the
  additional observation that the two block merges are indeed
  ``different''. \qed
\end{proof}

The next two rules ``split'' letters into two ``subletters''. The
first rule looks for letters that appear once in one sequence and
twice in the other.
\begin{rrule}\label{rule:star}
  If there is a marker~$v$ such that there is exactly one candidate
  match~$\{u,v\}$ containing~$v$, the marker~$u$ has at least one further
  candidate match~$\{u,w\}$, and any CSP which
  contains~$\{u,v\}$ has~$u$ and~$v$ in blocks of size one, then
  change the letter of~$v$ to some previously unused letter~$z$.
\end{rrule}
\begin{proof}[of correctness]
  Any CSP~$P$ of size~$k$ containing the candidate match~$\{u,v\}$ can
  be transformed into a CSP of size at most~$k$ containing the candidate
  match~$\{u,w\}$: Since~$u$ and~$v$ are in~$P$ in blocks of size one,
  replacing $\{u,v\}$ by~$\{u,w\}$ does not decrease the number of
  adjacencies in the blocks of the CSP. Furthermore, this exchange is
  possible, since~$\{u,w\}$ is the only candidate match
  containing~$w$. Hence, there is an optimal CSP in which~$v$ is not
  contained in any candidate match. It is thus safe to assign~$v$ some
  new unused letter. \qed
\end{proof}

The next rule follows the same idea, only with letters that appear
twice.
\begin{rrule}\label{rule:K22}
  If there is a set of four markers~$u$,~$v$,~$w$, and~$z$ such
  that~$\{u,w\}$, $\{u,z\}$, $\{v,w\}$,~$\{v,z\}$ are the only four
  candidate matches containing at least one of these markers, and any CSP which contains~$\{u,w\}$ and~$\{v,z\}$
  has~$u$,~$v$,~$w$, and~$z$ in blocks of size one, then change the
  letter of~$u$ and~$z$ to some previously unused letter~$x$. 
\end{rrule}
\begin{proof}[of correctness]
  The proof is similar to the proof of~\cref{rule:star}. Since the
  blocks containing~$u$,~$v$,~$w$, and~$z$ have size one, changing the
  candidate matches does not decrease the number of adjacencies in the
  blocks. Hence, replacing~$\{u,w\}$ and~$\{v,z\}$ by~$\{u,z\}$
  and~$\{v,w\}$ gives a CSP of the same size. 
\qed
\end{proof}
Note that checking whether there is any CSP including some match~$\{u,v\}$
that has~$u$ and~$v$ in blocks of size at least two can be done by simply checking
whether~$\{u,v\}$ is parallel to a candidate match of its right or
left neighbor.

\section{Implementation \& Experiments}
\label{sec:experiments}
We implemented the described algorithm to assess its performance on
genomic and on synthetic instances. We furthermore added three
additional data reduction rules and demonstrate their effect on the
genomic instances. Although our algorithm and experiments concern
unsigned strings, they can be seen as a first step; the results being
more than encouraging, we will adapt, in the near future, our
algorithm to the signed (and unbalanced) case. We ran all our
experiments on an Intel(R) Core(TM) i5 M 450 CPU 2.40GHz machine with
2GB memory under the Ubuntu 12.04 operating system. The program is
implemented in Java and runs under Java~1.6. The source code is
available from~\url{http://fpt.akt.tu-berlin.de/mcsp/}.  The search
tree is implemented as described in Sections~\ref{sec:search-tree}
and~\ref{sec:param-k'}.  In addition to the data reduction rules
described in~Section~\ref{sec:data-red}, we
apply~\cref{rule:reduce-unique}. All data reduction rules are applied
in the beginning and also in each search tree node.

\paragraph{Genomic Data.} We performed experiments with genomic
data from several bacteria. The data was obtained as follows. The raw
data consists of a file containing transcripts and proteins of the
species and positional information of the corresponding genes. This
data was downloaded from the EnsemblBacteria database~\cite{KSL+12}
and then filtered as described by~\citet{SZJ10} to obtain input data
for MSOAR 2.0. Then, the MSOAR 2.0 pipeline was invoked, and the MCSP
instances are output right before they are solved approximately by the vertex cover 2-approximation algorithm. These instances contain signed
genes. Since the presented correctness proof only solves the unsigned
MCSP problem, we removed all genes from the negative
strand. Afterwards, we removed all non matched genes. Finally, we
perform the following modification: the data from MSOAR actually can
allow arbitrary candidate matches between markers in~$S_1$
and~$S_2$. However in MCSP the candidate matches are ``transitive'',
that is, if~$\{u,v\}$,~$\{v,w\}$, and~$\{w,x\}$ are candidate matches of
an MCSP instance, then~$\{u,x\}$ is also a candidate match. We achieve
this property for the input data by adding the candidate
match~$\{u,x\}$, that is, every connected component of the
``marker-match'' graph is assigned one letter not used elsewhere.

The species under consideration are 
\textit{Borrelia burgdorferi}, \textit{Treponema pallidum}, \textit{Escherichia coli},
\textit{Bacillus subtilis}, and \textit{Bacillus thuringiensis}. Our
results are shown in~\cref{tab:genomic-results}; the main findings
are as follows. We can solve instances with 
hundreds of genes if the average number~$d^*$ of
occurrences for each letter and the number~$k'$ 
of blocks without unique letters is small. Moreover, the
parameter~$k'$ is in these instances much smaller than the
parameter~$k$. Finally, the data reduction rules are very effective in
decreasing the instance size and also decrease the overall number of
candidate matches somewhat.
\begin{table}[t]\centering \footnotesize{
  \caption{Running time, instance properties and effect of data reduction on genomic data. Herein, $n_1$ is the number of markers in the first genome,~$n_2$ the number of markers in the second genome,~$k$ is the CSP size,~$k'$ the number of blocks without fixed markers, $d^*$ the average number of candidate matches for each marker,~$n_1'$ and~$n'_2$ denote the respective number of markers after data reduction,~$\delta$ is the number of removed candidate matches during data reduction, and~$t$ is the running time in seconds.}  
\label{tab:genomic-results}
\begin{tabular*}{1\linewidth}{@{\extracolsep{\fill} } ll ccc cccc ccc}
\hline 
  Species~1 & Species~2 & $n_1$ & $n_2$ & $k$ &  $k'$ & $d$ & $d^*$ & $n'_1$ & $n'_2$ & $\delta$ & $t$  \\ \hline 
  \textit{B.~burg.}  & \textit{T.~pall.}  & 91&93& 68 & 0 & 3  & 1.02 &13&15 &4  & 0.06 \\ 
  \textit{B.~burg.}  & \textit{E.~coli}  & 66&72&59 & 0 &  6  & 1.09 &22&28&12  & 0.22 \\ 
  \textit{B.~burg.}  & \textit{B.~sub.}  & 83&91& 63& 3 & 6 & 1.16 &31&39&11 & 0.15 \\ 
  \textit{B.~burg.}  & \textit{B.~thur.} & 61&71& 51 & 3 & 5 & 1.19 &32&42&11 &0.09  \\
  \textit{T.~pall.}  & \textit{E~coli} & 89&93& 78 & 2 &  5 & 1.09 &22&26&7 & 0.35 \\ 
  \textit{T.~pall.}  & \textit{B.~sub.} & 136&144& 82 & 0  &  7 & 1.12 &23&31&11
 & 0.18 \\ 
  \textit{T.~pall.}  & \textit{B.~thur.} & 116&128& 76 & 0 &  6 & 1.16 &30&42&16  & 0.15 \\
  \textit{E.~coli}  & \textit{B.~sub.} & 264&287& 234 & 14 &  7 & 1.23 &128&151&54 &41.06  \\ 
  \textit{E.~coli}  & \textit{B.~thur.} & 249&282&221 &12 &  10 & 1.24 &129&162&59 & 18.64 \\
  \textit{B.~sub.}  & \textit{B.~thur.} & 673&693& 340 & 14  & 8 & 1.17 & 173&193&51  & 249.71\\ \hline
\end{tabular*} }
\end{table}

\paragraph{Synthetic Data.}
We also experimented with synthetic data to test how growth of~$k$
influences the running time. Each instance is generated randomly given five parameters: the string
length~$n$, the upper bound~$k$ on the number of blocks, the upper
bound~$d$ on the number of occurrences, the upper bound~$f$ on the
number of gene families (size of the alphabet), and finally the
number~$\delta$ of deleted markers (considered as \emph{noise} between
the blocks).  We randomly generate $k$ blocks using available markers
(that is, each block is a random string of markers so that the number of
occurrences is never more than $d$). The two input sequences are
generated by concatenating the blocks in different (random) orders,
interleaving with noisy parts of the required total size. 

We study the effect of varying parameters $n$, $k$ and $d$. To this
effect, we fix the number of deleted markers to $\delta=0.1n$ (we
observed that the behavior of the algorithm is uniform for $0\leq
\delta\leq 0.2n$).  Values of $\delta> 0.2n$ are harder, however, we
assume that deleting too many markers is of less relevance in genomic
applications.  The number~$f$ of gene families is fixed to
$3n/d$. This way we obtain an average number of occurrences which is
experimentally close to $d/2$.  The average occurrence of each letter
thus is roughly twice that of the genomic data; this was done to
obtain more difficult input.

In the experiments, we set~$n=1000$, and varied~$k$ from~$50$
to~$130$. One run was performed for~$d=6$ and one for~$d=8$. Our
results are shown in~\cref{tab:synth-d6}. For each set of parameter
values, we generated 50 instances. We make the following main
observations. First, increasing~$d$ makes the instances much
harder. Second, for~$d=6$, the combinatorial explosion sets in
at~$k\approx 120$, for~$d=8$ this happens already at~$k\approx
100$. Finally, the algorithm efficiently solves instances with~$n=1000$
and~$k\approx 120$ when the average occurrence of each letter is
roughly 3.5 (this is the average occurrence number in the experiments
for~$d=8$).
 
 \begin{table}[t]
   \centering
   \caption{Average running time in seconds for synthetic instances with~$d=6$ and~$d=8$,~$n=1000$ and varying~$k$; for each parameter triple, 50 instances were generated.}
   \label{tab:synth-d6}
   \begin{tabular*}{0.7\linewidth}{@{\extracolsep{\fill} }l l l l}
\hline 

 \multicolumn{2}{ c }{$d=6$} & \multicolumn{2}{ c }{$d=8$} \\

$k$ & running time & $k$ & running time  \\ \hline
50 &   0.06 &  50& 0.07 \\ 
60 &   0.06 &  60& 0.06 \\
70 &   0.07 &  70& 0.08\\
80 &   0.09 &  80& 0.09\\
90 &   0.10 &  90& 0.12\\
100 &   0.12 &  100& 0.16 \\
110 &   0.13 &  110& 0.26\\
120 &   0.18 &  120& 1.62\\
130 &   0.21 &  130& 30.42\\ \hline
   \end{tabular*}
 \end{table}

\section{Conclusion}

We have presented an efficient fixed-parameter algorithm for the \textsc{Minimum Common String Partition} problem
with parameters $k$ and $d$. Our algorithm even allows for unbalanced strings, since it can delete 
superfluous markers between consecutive blocks of the string partition. Looking towards practical 
applications, it would be interesting to consider signed instances, that is, blocks can be read either from
left to right or from right to left with opposite signs. 
We conjecture that our algorithm can be extended to solve the signed variant of MCSP. 
Another generalization of MCSP is as follows. Pairs of markers which form candidate matches are given
in input, rather than being defined from classes of letters. 
From a graph theory point of view, the 
bipartite graph of candidate matches may contain arbitrary connected components, 
not only complete ones.
It would be of interest to provide  efficient algorithms for this extension of MCSP.

\makeatletter
\renewcommand\bibsection%
{
  \section*{\refname
    \@mkboth{\MakeUppercase{\refname}}{\MakeUppercase{\refname}}}
}
\makeatother

{
\bibliographystyle{abbrvnat}
\bibliography{string-partition}
}

\begin{center}

\end{center}
\end{document}
